\documentclass[11pt, letterpaper]{article}

\usepackage[left]{lineno}

\usepackage[T1]{fontenc}
\usepackage{lmodern}
\usepackage{amsmath, amssymb, amsthm, dsfont, mathtools}
\usepackage{thmtools}
\usepackage{thm-restate}
\usepackage{xspace}
\usepackage{graphicx}
\usepackage{tikz}
\usetikzlibrary{decorations.pathmorphing}
\usetikzlibrary{arrows.meta,positioning,calc}
\usetikzlibrary{decorations.pathreplacing}
\usepackage[dvipsnames]{xcolor}
\definecolor{darkgreen}{RGB}{0,128,43}

\usepackage{enumitem}

\title{Temporal Cliques Admit Linear Spanners}
\author{Júlia Baligács\thanks{Email: \href{mailto:jbaligacs@gmail.com}{jbaligacs@gmail.com}. Funded by the European Union (ERC, CCOO, 101165139).
Views and opinions expressed are however those of the author only and do not necessarily reflect those of the European Union or the European Research Council. Neither the European Union nor the granting authority can be held responsible for them.}
\vspace{2mm}\\ University of Oxford}
\date{}

\usepackage[margin=1in]{geometry}

\usepackage{hyperref}
\usepackage[nameinlink]{cleveref}

\crefformat{lemma}{#2Lemma~#1#3}
\crefformat{theorem}{#2Theorem~#1#3}
\crefformat{proposition}{#2Proposition~#1#3}
\crefformat{remark}{#2Remark~#1#3}
\crefformat{equation}{#2(#1)#3}
\crefformat{corollary}{#2Corollary~#1#3}
\crefformat{example}{#2Example~#1#3}
\crefformat{definition}{#2Definition~#1#3}
\crefformat{figure}{#2Figure~#1#3}
\crefformat{observation}{#2Observation~#1#3}
\crefformat{section}{#2Section~#1#3}
\crefformat{question}{#2Question~#1#3}

\definecolor{LinkViolet}{hsb}{0.37,1,0.5}
\hypersetup{
  colorlinks=true,
  linkcolor=Violet,
  citecolor=Violet,
  urlcolor=green!50!black
}

\newcommand{\N}{\mathbb{N}}

\newcommand{\bigO}{\mathcal{O}}
\renewcommand{\subset}{\subseteq}

\DeclareMathOperator{\pos}{pos}
\DeclareMathOperator{\minmat}{N_{\mathrm{min}}}
\DeclareMathOperator{\maxmat}{N_{\mathrm{max}}}
\DeclareMathOperator{\ext}{Ext}
\DeclareMathOperator{\extind}{ind}
\renewcommand{\star}{\operatorname{Star}}
\newcommand{\minedges}{E_{\mathrm{min}}}
\newcommand{\maxedges}{E_{\mathrm{max}}}

\newcommand{\spiralarc}[6][]{%
  \begin{scope}[shift={#2}]
    \draw[->,thick,#1]
      plot[variable=\t,domain=0:1,samples=200,smooth]
        ({(#5 + (#6-#5)*\t) * cos(#3 + (#4-#3)*\t)},
         {(#5 + (#6-#5)*\t) * sin(#3 + (#4-#3)*\t)});
  \end{scope}%
}

\theoremstyle{definition}
\newtheorem{definition}{Definition}

\theoremstyle{plain}
\newtheorem{theorem}[definition]{Theorem}
\newtheorem{corollary}[definition]{Corollary}
\newtheorem{lemma}[definition]{Lemma}

\newtheorem{observation}[definition]{Observation}

\newtheorem{question}[definition]{Question}

\begin{document}

\maketitle

\begin{abstract}
A temporal graph is a graph in which every edge carries a non-empty set of time labels, and it is \emph{temporally connected} if for every two vertices $u$ and $v$, there exists a $u$-$v$-path with non-decreasing time labels. A \emph{spanner} is a subset of its edges preserving temporal connectivity.
Unlike static graphs, temporally connected graphs need not admit sparse spanners; nonetheless, minimizing spanner size is a central and widely studied problem.
A particularly intriguing question is whether temporal cliques admit spanners of linear size.
Despite considerable effort over the past years, the best known upper bound remained $O(n \log n)$.
We finally resolve this question, proving that every temporal clique on $n$ vertices admits a spanner of size~$7n$.
Moreover, such a spanner can be computed in polynomial time.
\end{abstract}

\section{Introduction}

A \emph{temporal graph} is a graph $G=(V,E,\lambda)$ in which each edge $e\in E$
carries a non-empty set of \emph{time labels} $\lambda(e)\subseteq\mathbb{R}$.
A \emph{temporal path} is a path $(v_1,\dots,v_k)$ whose edges admit non-decreasing time labels, that is, there exist~$l_i\in\lambda(\{v_i,v_{i+1}\})$ with $l_1\leq\cdots\leq l_{k-1}$, in which case $v_k$ is said to be \emph{temporally reachable} from $v_1$.
(While some works also consider the variant where labels need to be strictly increasing, we follow the classical non-decreasing convention.)
Then $G$ is \emph{temporally connected} if $u$ is temporally reachable from $v$ for every $u,v\in V$.

Temporal graphs model a wide range of real-world phenomena.
For example, in an information-dissemination setting, vertices represent agents and a time label on an edge records when its two endpoints communicate. 
A temporal $u$-$v$-path then certifies that agent $v$ can acquire information originating at $u$.
Beyond communication protocols, temporal graphs naturally encode transportation networks, the spread of infectious diseases, and dynamically evolving networks more generally.
Over the past decades, temporal graphs have also emerged as compelling theoretical objects and have attracted considerable attention (see surveys~\cite{Michail16, CasteigtsSurvey}).
A central theme is the search for temporal analogues of classical graph-theoretic results, particularly concerning connectivity.
Such analogues, however, frequently fail to hold outright or turn out to be significantly harder to establish~\cite{AkridaTVC18, Mertzios23,CasteigtsWaiting21,ZschocheJCSS20}.

Temporal reachability serves as a prime example of the latter.
Unlike in static graphs, temporal reachability is neither symmetric nor transitive, making connectivity a far more subtle notion.
Furthermore, properties of the underlying static graph offer essentially no guarantee of temporal connectivity: whenever $\{u,v\}\notin E$, one can assign label $2$ to all edges incident to $u$ and label $1$ to all edges incident to $v$, so that no temporal path from $u$ to $v$ exists.
Consequently, the only static graphs guaranteed to be temporally connected under any labeling are cliques.

A \emph{temporal spanner} of $G$ is a subset of edges $E'\subset E$ such that $(V,E',\lambda|_{E'})$ is temporally connected.
In the static setting, every connected graph on $n=|V|$ vertices admits a spanner of size $n-1$, namely a spanning tree, computable in polynomial time.
In the temporal setting, the situation is far more complex.
Kempe, Kleinberg, and Kumar~\cite{KempeKK00} asked whether every temporally connected graph admits a temporal spanner of size $\mathcal{O}(n)$, and immediately answered it negatively by constructing graphs in which every temporal spanner has size $\Omega(n\log n)$.
This was later strengthened by Axiotis and Fotakis~\cite{AxiotisF16}, who constructed graphs requiring spanners of size~$\Omega(n^2)$.
From a computational perspective, finding a temporal spanner of minimum size is~APX-hard~\cite{AxiotisF16}.
In light of these results, a natural direction is to identify graph classes admitting sparse spanners.
A particularly compelling candidate is the class of temporal cliques:
as argued above, cliques are the only graphs guaranteeing temporal connectivity regardless of the labeling, making them a canonical setting in which to study temporal spanners.

\begin{question}[Casteigts, Peters, Schoeters, 2019]
\label{question}
Does every temporal clique admit a spanner of size~$\bigO(n)$?
\end{question}

Since it was first posed as an open problem in 2019~(conference version of~\cite{Casteigts21}), \cref{question} has been approached from many directions.
Akrida, Gąsieniec, Mertzios, and Spirakis~\cite{AkridaGMS17} gave the first improvement over the trivial bound, a spanner of size $\binom{n}{2}-\lfloor n/4\rfloor$ (still $\Theta(n^2)$), and showed that if each edge receives a single label chosen uniformly at random from an interval, then a spanner of size $\bigO(n\log n)$ exists almost surely.
The most significant improvement is due to Casteigts, Peters, and Schoeters~\cite{Casteigts21}, who proved that every temporal clique admits a spanner of size~$\bigO(n\log n)$.
More recently, powerful new techniques~\cite{Angrick24,Carnevale25} pushed the linear regime within reach, but only for restricted subclasses such as the so-called edge-pivot cliques.
The problem has also been studied under additional constraints: bounding the length of the reachability paths~\cite{BiloESA22}, requiring the spanner to satisfy a robustness criterion~\cite{BiloDGLR24}, or building it from selfish agents located at the vertices~\cite{TemporalNetworkCreationGames23,
TemporalNetworkCreationGames25}, as well as for general temporal graphs~\cite{KempeKK00,AxiotisF16} and random temporal graphs~\cite{CasteigtsRandom}.
Despite this sustained effort, the~$\bigO(n\log n)$ bound remained the state of the art prior to our work.

\paragraph{Our results.}
We answer \cref{question} in the affirmative. 
Every temporal clique admits a spanner of \emph{linear} size.

\begin{restatable}{theorem}{mainthm}
\label{thm:mainthm}
Every temporal clique on $n$ vertices admits a temporal spanner of size $7n$.
\end{restatable}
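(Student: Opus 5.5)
We first reduce to the case where every edge carries a single label and all labels are distinct. Keeping one label per edge, chosen as described below, only removes temporal paths, so a spanner of the reduced instance is a spanner of the original. Ties can be broken by a small perturbation that preserves the relative order of distinct labels, since this only destroys non-decreasing paths that use equal labels.

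The natural single label to keep is the minimum label of each edge. This is not obviously safe, because connectivity might rely on large labels. We therefore argue directly that the clique with all labels replaced by their minima is still temporally connected. Any single-labelled clique is temporally connected, because the edge $\{u,v\}$ alone is a temporal path. So after the reduction we have a clique with a total order on its edges, and the task is to select $7n$ edges that preserve all-pairs reachability.

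Following the machinery hinted at by the notation $\minedges$, $\maxedges$, $\star$ and $\ext$, the plan has three steps.

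\begin{enumerate}[label=(\roman*)]
\item For every vertex $v$, put the minimum-label edge at $v$ into $\minedges$ and the maximum-label edge at $v$ into $\maxedges$. This costs at most $2n$ edges.

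\item The standard observation is the following. If $e=\{u,v\}$ is the minimum edge at $u$, then $u$ can reach $v$ at the earliest possible time, and then $u$ can continue along any later edge at $v$. Dually, maximum edges allow arrivals at the latest time. The goal is to show that, for most pairs $(x,y)$, there is a path of the form: a min-edge out of $x$, then a middle segment, then a max-edge into $y$.

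\item To supply the middle segments, we would like to find a small set of ``hubs''. One candidate is a vertex $h$ together with its star $\star(h)$, which has $n-1$ edges. The point of such a hub is that every vertex whose min-edge arrives early at a neighbour can be routed through $h$ and out again later.
\end{enumerate}

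A single star does not suffice. The fix is an extension procedure $\ext$ that handles the vertices not served by the first hub. Their min and max edges point into some structure, and we repeatedly attach them, charging each added edge to a distinct vertex. This is done with a potential or indexing argument, using an index $\extind$. The target budget is a constant number of edges per vertex: $2n$ for min and max edges, $n$ for a star, and the remaining at most $4n$ edges for extensions and fix-ups. Together this gives $7n$.

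The main obstacle is step (iii): proving that the leftover pairs can be covered with only $\bigO(1)$ additional edges per vertex rather than $\bigO(\log n)$. The $\bigO(n\log n)$ bound of Casteigts, Peters and Schoeters comes precisely from recursing, halving the problem each time. I would try to replace that recursion with a single charging argument. The idea is to show that the set of vertices $v$ whose min-edge neighbour $N_{\min}(v)$ fails to be ``covered'' forms chains along which each edge can be charged once. This should follow from a monotonicity property in the ordering of the min and max matrices $\minmat,\maxmat$.

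Polynomial-time computability would then follow, since every step is an explicit selection.
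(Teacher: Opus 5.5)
Your reduction to one label per edge with distinct labels is fine, but everything after it is a plan rather than a proof, so the proposal does not establish even an $\bigO(n)$ bound. Step (iii) is left as ``I would try'' and ``this should follow from a monotonicity property''. $\ext$ and $\extind$ appear only as names. The split $7n=2n+n+4n$ does not come from any construction.

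Two ideas that the paper's argument rests on are missing. The first is a normalization that makes the extremal edges usable. In a general clique the minimum edge at $x$ need not be early at $\minmat(x)$, and the minimum edges need not form a matching, so your template ``min-edge out of $x$, middle segment, max-edge into $y$'' does not compose. The paper repeatedly removes, at a cost of at most $4$ edges each, vertices $v$ for which $(v,\minmat(u),u)$ and $(w,\maxmat(w),v)$ are temporal walks for some $u,w$. By a result of Carnevale, Casteigts and Corsini, the remaining clique on $n'$ vertices then splits into $V^-=\{\minmat(v):v\in V'\}$ and $V^+=\{\maxmat(v):v\in V'\}$. The bi-clique $(V^-,V^+)$ is extremally matched: $\minedges$ and $\maxedges$ are perfect matchings whose edges are extremal at both endpoints.

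The second idea, and the heart of the proof, is a dichotomy with two hubs on opposite sides. Fix a source $s^*$, list the targets $t_0,\dots,t_{n-1}$ in increasing order of $\pos_{s^*}$, and put $s_i=\minmat(t_i)$ and $k=\lfloor n/2\rfloor$. Two cases arise.
\begin{enumerate}[label=(\roman*)]
\item Every $s_i$ with $i>k$ has a temporal path of at most $4$ edges ending in an edge $\{t_j,s^*\}$ with $j\le k$. Then $\star(s^*)$ plus at most $3n$ further edges covers $S\times\{t_k,\dots,t_{n-1}\}$.
\item Some $s_i$ with $i>k$ has no such path. Then $\ext(s^*)\cup\ext(t_i)$, which has at most $6n$ edges, covers $\{s_0,\dots,s_k\}\times T$. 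An uncovered pair $(s_j,t)$ would make $(s_i,t_i,\maxmat(t),t_j,s^*)$ a temporal walk, which is exactly such a crossing path.
\end{enumerate}
Here $\ext(s^*)$ adds to $\minedges\cup\maxedges\cup\star(s^*)$ one edge per source $s$, namely $\{s,t_l\}$ for the largest $l$ with $\pos_{t_l}(s)>\pos_{t_l}(s^*)$; $\ext(t^*)$ is defined symmetrically. Nothing in your plan corresponds to the second star, centered at the target $t_i$ that is selected precisely because $s_i$ fails to cross. Attaching vertices to the first hub cannot replace it. In the shifted matching graph no source crosses any cut of any simple star, so there is nothing to attach.

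Your diagnosis of the $\bigO(n\log n)$ barrier also points away from what works. Recursion is not the problem, and the paper keeps it. What matters is that each round costs $\bigO(n)$ and leaves a single instance of at most half the size. After covering one of the two products above, the uncovered part is a bi-clique with one side of at most $\lfloor n/2\rfloor$ vertices. Dismountability turns it into an extremally matched instance with $x_n\le n/2$ vertices per side, at a cost of at most $3n-4x_n$ edges. Hence $f(n)\le 9n-4x_n+f(x_n)\le 14n$ by induction.

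The constant $7n$ for cliques is then $14\cdot n'/2$ for the bi-clique $(V^-,V^+)$, plus $4(n-n')$ edges for the removed vertices, giving $4n+3n'\le 7n$. The bi-clique spanner already preserves all reachabilities within $V'$ because it contains $\minedges\cup\maxedges$. A single-hub charging argument along chains would have to be invented from scratch, and you do not state the monotonicity property it would need.
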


This is asymptotically optimal, since every temporal spanner trivially needs to contain a spanning tree, i.e., at least $n-1$ edges, and in fact at least $2n-4$ edges by a classical result in gossip theory~\cite{bumby1981}.

It is noteworthy that, despite the problem having resisted multiple independent attempts over several years, our proof of a linear bound is surprisingly short and self-contained: while it draws on ideas from prior work, it does not invoke any external result as a black box.
On its own it yields a spanner of size $14n$. The factor-two improvement to the $7n$ of \cref{thm:mainthm} is a minor refinement, obtained by invoking a result of~\cite{Carnevale25}, and is the only part of our work that relies on an external result.

To prove \cref{thm:mainthm}, we pass to the bipartite setting.
As observed in~\cite{Angrick24}, \cref{question} is equivalent to its analogue for bi-cliques, which are often more convenient to work with. 
We make this equivalence precise in \cref{sec:preliminaries} and work with bi-cliques throughout.
A \emph{temporal bi-clique} is a triple $G=(S,T,\lambda)$ where $S$ and $T$ are disjoint sets of \emph{sources} and \emph{targets}, the edge set is~$E:=\{\{s,t\}:s\in S,\,t\in T\}$, and $\lambda$ assigns each edge a non-empty finite set of time labels.
A \emph{spanner} is a subset of its edges through which every source reaches every target by a temporal path; paths from targets to sources are not required.

\begin{theorem}
\label{thm:mainthm_biclique}
Every temporal bi-clique $(S,T,\lambda)$ admits a temporal spanner of size
\[10\min(|S|,|T|)+2(|S|+|T|).
\]
\end{theorem}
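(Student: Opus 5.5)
We first reduce to the case $|S|\le|T|$, writing $k:=|S|$. This is done by reversing time: replace every label $l$ by $-l$ and swap the roles of $S$ and $T$. A temporal $s$-$t$-path in the original bi-clique becomes a temporal $t$-$s$-path in the reversed one, so spanners correspond. It therefore suffices to build a spanner of size $10k+2(|S|+|T|)$.

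The plan is to route every source--target connection through a small set of ``hub'' interactions, using two canonical edge families. For each target $t$, let $e^{\min}_t$ be the edge at $t$ with the smallest label. For each source $s$, let $e^{\max}_s$ be the edge at $s$ with the largest label. For each vertex $v$, let $e^{\min}_v$ denote its earliest edge and $e^{\max}_v$ its latest edge. Including these edges for all vertices costs at most $2(|S|+|T|)$ edges, which is exactly the additive term in the statement.

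The intended role of these edges is as follows. Take a source $s$. If the label on its direct edge $st$ is at most $\min\lambda$ at $t$'s earliest partner structure, then $s$ can reach $t$ after leaving along its earliest edge. Symmetrically, $t$ can be entered late via its latest edge. The difficulty is the remaining ``middle'' pairs. For these I would define, for each source $s$, a time window between $m_s:=\min$ label at $s$ and $M_s:=\max$ label at $s$. I would then show that a source whose earliest edge goes to $t_1$ can use the path $s\to t_1\to s'\to t$ whenever some source $s'$ has a labelled edge to $t_1$ after $m_s$ and a labelled edge to $t$ later still.

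The core step is to choose, for each source $s$, a bounded number of additional edges, say $10$, so that every source reaches every target. Here I would sort the sources by the times of their extremal edges. The idea is a doubling/greedy argument in the spirit of the $O(n\log n)$ proof of \cite{Casteigts21}. Instead of recursing on halves, we charge each source a constant number of edges: its edges to the targets that are ``pivots'' for the earliest and latest sources. Concretely, let $s^*$ be the source whose earliest edge is latest among all sources. Every source $s$ reaches, via its own early edge, some target $t_s$. From $t_s$ we want to pass through one of $O(1)$ designated sources that then broadcast to all targets.

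The main obstacle is guaranteeing this broadcast with only $O(k)$ edges in total. A single source cannot be assumed to reach every target with a single edge each — that alone is $|T|$ edges, which is affordable only once, or $O(1)$ times. So I would select $O(1)$ ``broadcast'' sources whose latest edges cover all targets late. Each other source would need only $O(1)$ edges to feed into the broadcast sources in time.

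Making the time ordering work for all sources simultaneously is where I expect the real difficulty. Some sources have all their labels very early, and they would have to be handled by a separate symmetric argument in which early ``gathering'' targets are used. The constant $10$ would come from summing the feeding edges across these cases. I would first try to prove the bound via an exchange argument on the extremal labels and see whether the cases close up.
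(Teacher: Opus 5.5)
Your proposal has a genuine gap: the step that carries the whole content of the theorem is asserted, not proved. You claim one can ``select $O(1)$ broadcast sources whose latest edges cover all targets late'' and that every other source needs ``only $O(1)$ edges to feed into the broadcast sources in time.'' This is exactly what kept the bound at $O(n\log n)$, and you give no mechanism for it; you say yourself that making the time ordering work for all sources is the real difficulty. There is also no reason to expect a fixed global set of $O(1)$ hubs to work. A walk can enter a broadcast source $s'$ before the edge $\{s',t\}$ only through the $\pos_{s'}(t)$ edges that precede it at $s'$, so $s'$ is nearly useless for its early targets, and nothing in your plan controls which targets are early for which hubs. The constant $10$ (``$10$ edges per source'') and the additive $2(|S|+|T|)$ (``one earliest and one latest edge per vertex'') are read off the statement rather than derived. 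Your extremal edges also do not compose as you intend. Prefixing a walk leaving $t_1$ with the earliest edge $\{s,t_1\}$ of $s$ is temporal only if that edge is also early at $t_1$, e.g.\ if $\minmat(t_1)=s$, which fails in a general bi-clique.

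The paper supplies three ingredients your plan lacks.
\begin{enumerate}
\item \emph{Dismountability.} If $\pos_{\minmat(s)}(s')<\pos_{\minmat(s)}(s)$, then $s'$ can be deleted at a cost of two edges, since $(s',\minmat(s),s)$ can be prefixed to any walk from $s$; targets are handled symmetrically via $\maxmat$. Iterating yields an extremally matched $G[S',T']$ with $|S'|=|T'|$, in which $\minedges$ and $\maxedges$ are perfect matchings that can be freely prefixed or appended. The additive $2(|S|+|T|)$ in the statement originates from this deletion cost $2(|S|+|T|-|S'|-|T'|)$.
\item \emph{A dichotomy on the extremally matched instance.} Take the $s^*$-ordered labeling with $k=\lfloor n/2\rfloor$. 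Either every $s_i$ with $i>k$ is $3$-hop-$k$-cut-crossing, in which case $\star(s^*)$ plus at most $3n$ path edges covers $(S,\{t_k,\dots,t_{n-1}\})$. Or some non-crossing $s_i$ makes $\ext(s^*)\cup\ext(t_i)$, at most $6n$ edges, cover $(\{s_0,\dots,s_k\},T)$ (Corollary~\ref{cor:twoside_uncovered} and Lemma~\ref{lem:crossing_or_covering}).
\item \emph{Recursion.} The dichotomy covers only half of one side. The remainder is dismounted to an extremally matched instance of size $x_n\le n/2$, giving $f(n)\le 9n-4x_n+f(x_n)\le 14n$. Hence the total is $14|S'|+2(|S|+|T|-2|S'|)=10|S'|+2(|S|+|T|)$.
\end{enumerate}
So the $10$ is $14-4$, not a per-source budget, and the hubs change at every recursion level rather than being $O(1)$ global vertices. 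Your time-reversal reduction to $|S|\le|T|$ is valid but unnecessary, since dismountability already gives $|S'|=|T'|\le\min(|S|,|T|)$.
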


We further show that all spanners above can be computed in polynomial time.

\paragraph{Additional related work.}
Whereas we sparsify a given temporal graph, a related line of work asks how to construct the time labels in the first place. The usual goal is to make a static graph temporally connected with as few labels as possible~\cite{KlobasMFCS22}, sometimes under additional constraints~\cite{MertziosAlgorithmica19}. Motivated by epidemic control, the opposite objective has also been studied: reducing the number of vertices a single source can reach. This can be done by deleting edges~\cite{EnrightJCSS21}, reordering their activation times~\cite{Enright21Labels}, or delaying them~\cite{MolterMFCS21}. 

Temporal connectivity has also been studied on temporal versions of random graphs. A common model assigns each edge of an Erd\H{o}s--R\'enyi graph $G(n,p)$ a time label chosen uniformly at random from $[0,1]$. Here, sharp thresholds are known for various reachability properties~\cite{CasteigtsRandom}, including the emergence of a giant temporally connected component~\cite{Giantcomponents26}. Further properties have been studied in other models, such as the temporal diameter~\cite{BroutinKL24} and the size of the largest temporal clique~\cite{MertziosDeltaClique24,AtamanchukDL25}.

\paragraph{Outline and overview of techniques.}
In \cref{sec:preliminaries}, we collect and summarize the ideas from prior
work that we build on. Using these, we prove that \cref{thm:mainthm_biclique}
implies that temporal cliques contain spanners of linear size, and that we may assume that temporal bi-cliques satisfy an additional structural property known as
\emph{extremally matched}.

A generally useful technique throughout the paper is to find a set of edges solving the reachabilities of a large subgraph and then proceed with the unsolved part.
Formally, for two vertices~$s$ and~$t$, a set of edges $E'$ \emph{covers} $(s,t)$ if a temporal $s$-$t$-path exists using only edges of $E'$.
In \cref{sec:cuts_in_stars}, we introduce a natural set of edges called a \emph{simple star}, consisting of a star and a matching, which already covers half of all pairs $(s,t)\in S\times T$. We also identify a technique for appending further paths to a simple star, covering additional pairs at a cost of adding only $\bigO(n)$ additional edges.

In \cref{sec:intuition}, we present two results that, while not needed for the main proof, motivate the key ideas. We show that if $\Omega(n)$ paths can be appended to a simple star, the main result follows relatively easily.
This is, however, not the case in general: a classical construction from~\cite{Angrick24}, the \emph{shifted matching graph}, serves as a counterexample. 
Interestingly, the converse also holds: we prove that if \emph{no} paths can be appended to a simple star, then the graph is isomorphic to a shifted matching graph --- and shifted matching graphs are known to admit linear spanners. 
In summary, both extreme cases, where $\Omega(n)$ paths are appendable or none are, admit linear spanners. It remains to handle the intermediate cases.
The main difficulty here lies in finding the correct notion, one that captures both extreme cases simultaneously.

Our main technical contribution is in  \cref{sec:extended_stars}. We introduce the carefully constructed \emph{extended stars}, which capture the reachabilities of a simple star together with a subset of its appendable paths. 
We prove that either there exist two extended stars, one centered at a source and one at a target (analogously to the spanner structure of a shifted matching graph), that together cover a sufficiently large subgraph, or there exists a simple star to which $\Omega(n)$ paths can be appended. 
Intuitively, the first condition holds when the graph is ``close'' to a shifted matching graph, and the second when it is ``far'' from one.
This allows us to complete our proof that temporal cliques contain spanners of linear size.

Finally, in \cref{sec:improved_factor}, we briefly show that the constant factor found in our construction can be easily halved by invoking a result from~\cite{Carnevale25}. In \cref{sec:shifted_matching_proofs}, we provide the missing proofs from \cref{sec:intuition} that were not required for our main results, but only to settle the intuition.
In \cref{sec:conclusion}, we recap the proof observing that the steps can be carried out in polynomial time, and we conclude.

\section{Preliminaries}
\label{sec:preliminaries}

\subsection{Notation and conventions}
We collect a few simple observations that fix the setting and notation used throughout. 
We write a \emph{temporal clique} as~$G=(V,\lambda)$, omitting the edge set, which is always~$\binom{V}{2}$. 
First, observe that it suffices to prove Theorems~\ref{thm:mainthm} and~\ref{thm:mainthm_biclique} for temporal cliques and bi-cliques, respectively, in which every edge carries exactly one time label: when an edge has multiple labels, delete all but one; then any spanner of the modified graph is also a spanner of the original.
We therefore assume~$\lambda(e)\in\mathbb{R}$ for all edges $e$ throughout.
Moreover, since temporality of a path depends only on the relative ordering of its edge labels, we may assume without loss of generality that
$\lambda(e)\in\mathbb{N}$; more generally, for any set $L\subset\mathbb{R}$ of size at least the number of edges $|E|$, we may assume $\lambda(E)\subset L$ whenever convenient.

We may further assume that no two edges incident to the same vertex share a label: ties can be broken arbitrarily, and any spanner of the modified temporal graph is also a spanner of the original.
Under this assumption, each vertex~$v$ induces a total ordering on its neighbors, and we write~$\pos_v(u)=i$ if~$\{u,v\}$ is the~$(i+1)$-th smallest edge incident to~$v$ (so $\pos_v$ is~$0$-indexed).
We caution that~$\pos_v(u)=\pos_u(v)$ does \emph{not} hold in general.
In this notation, a path $(v_1,\dots,v_k)$ is temporal if and only if $\pos_{v_i}(v_{i-1})<\pos_{v_i}(v_{i+1})$ for every~$i\in\{2,\dots,k-1\}$.
Many of our arguments hinge on the neighbors of a vertex that occupy the smallest and largest positions in its induced ordering. 
For a vertex $v$, we define $\minmat(v)$ as the unique vertex~$w$ with $\pos_v(w)=0$, and $\maxmat(v)$ as the unique vertex $w$ maximizing $\pos_v(w)$.

To certify that~$v_k$ is temporally reachable from~$v_1$ in a spanner, we sometimes exhibit a \emph{temporal walk} rather than a temporal path, that is, a sequence~$W=(v_1,\dots,v_k)$ of vertices that may repeat, including consecutively. 
We call~$W$ \emph{temporal} if the sequence~$(u_1,\dots,u_{k'})$ obtained by collapsing each maximal run of identical consecutive vertices into a single occurrence satisfies $\pos_{u_i}(u_{i-1})\leq\pos_{u_i}(u_{i+1})$ for every~$i\in\{2,\dots,k'-1\}$. 
It is immediate that a temporal walk from~$v_1$ to~$v_k$ contains a temporal~\mbox{$v_1$-$v_k$}-path, so it serves as a valid certificate. 
We use walks because they let us avoid case distinctions when a vertex or edge may repeat.

\subsection{Temporal cliques and bi-cliques}

It was observed in~\cite{Angrick24} that the question of whether temporal cliques admit sparse spanners is in fact equivalent to the analogous question for bi-cliques.
To see the implication from cliques to bi-cliques, let $G = (S, T, \lambda)$ be a temporal bi-clique with edge labels in $[1,L]$.
We extend $G$ to a temporal clique by assigning label $L+1$ to all edges between two sources, and label $0$ to all edges between two targets (cf.~\cref{fig:bicliques}, left).
Observe that any temporal path from a source to a target cannot traverse any of the newly added edges.
Hence, every temporal spanner of the constructed clique contains a temporal spanner of~$G$.

For the reverse implication, the authors of \cite{Angrick24} showed how to associate to any temporal clique a temporal bi-clique whose spanners translate to spanners of the original graph.
We employ this construction to prove that \cref{thm:mainthm_biclique} implies our main result that temporal cliques admit spanners of linear size.

\begin{figure}
\centering
\begin{tikzpicture}[scale=0.7, every node/.style={font=\small, circle, draw, fill, inner sep=0pt, minimum size=5pt}, baseline=(current bounding box.center)]
\foreach \x in {0,1,2,3}{
\node (s\x) at (0,-\x) {};
\node (t\x) at (2,-\x) {};
}
\foreach \x in {0,1,2,3}{
\foreach \y in {0,1,2,3}{
\draw[gray, thin] (s\x) to (t\y);
}}
\foreach \x in {0,...,3} {
  \foreach \y in {0,...,3} {
	\ifnum\x>\y
      \ifnum\x=\numexpr\y+1\relax
        \draw[red] (s\x) to [bend left=20] (s\y);
		\draw[blue] (t\x) to [bend right=20] (t\y);
      \else
        \draw[red] (s\x) to [bend left=40] (s\y);
		\draw[blue] (t\x) to [bend right=40] (t\y);
      \fi
    \fi
  }
}
\node at (-0.7,-1.5) [draw=none, fill=none, red, anchor=east] {$L+1$};
\node at (2.7,-1.5) [draw=none, fill=none, blue, anchor=west] {$0$};
\node at (1,-2.6) [draw=none, fill=none, gray, anchor=north] {$\in [1,L]$};
\node at (0,0.2) [draw=none, fill=none, anchor=south] {$S$};
\node at (2,0.2) [draw=none, fill=none, anchor=south] {$T$};
\end{tikzpicture}
\hspace{4cm}
\begin{tikzpicture}[scale=0.6, every node/.style={font=\small, circle, draw, fill, inner sep=0pt, minimum size=5pt}, baseline=(current bounding box.center)]
\node (u) at (0,0) {};
\node (v) at (2,0) {};
\node (w) at (1,1.2) {};
\draw[blue, thick] (u) to node[below, draw=none, fill=none, yshift=-1pt] {1} (v); 
\draw[orange, thick] (u) to node[above, draw=none, fill=none, xshift=-2pt] {2} (w); 
\draw[purple, thick] (v) to node[above, draw=none, fill=none, xshift=2pt] {3} (w); 
\node at (-0.2,-0.2) [draw=none, fill=none, anchor=east] {$u$};
\node at (2.2,-0.2) [draw=none, fill=none, anchor=west] {$w$};
\node at (1,1.4) [draw=none, fill=none, anchor=south] {$v$};
\end{tikzpicture}
\hspace{5mm}
\begin{tikzpicture}[scale=0.8, every node/.style={font=\small, circle, draw, fill, inner sep=0pt, minimum size=5pt}, baseline=(current bounding box.center)]
\node (us) at (0,2) {};
\node (vs) at (0,1) {};
\node (ws) at (0,0) {};
\node (ut) at (2,2) {};
\node (vt) at (2,1) {};
\node (wt) at (2,0) {};
\draw[thick, blue] (us) to  node [below, draw=none, fill=none, font=\scriptsize, pos=0.1, xshift=-1pt] {1} (wt);
\draw[thick, blue] (ws) to node [above, draw=none, fill=none, font=\scriptsize, pos=0.1, xshift=-1pt] {1} (ut);
\draw[thick, orange] (us) to node [above, draw=none, fill=none, font=\scriptsize, pos=0.35, yshift=0.5pt] {2} (vt);
\draw[thick, orange] (ut) to (vs);
\draw[thick, purple] (vs) to (wt);
\draw[thick, purple] (ws) to node [below, draw=none, fill=none, font=\scriptsize, pos=0.35, yshift=-0.5pt] {3} (vt);
\draw[thin,gray] (vs) to (vt);
\draw[thin,gray] (ws) to node [below, draw=none, fill=none, font=\scriptsize, pos=0.1,yshift=-1pt] {0}(wt);
\draw[thin,gray] (us) to node [above, draw=none, fill=none, font=\scriptsize, pos=0.1] {0} (ut);
\node at (-0.2,2) [draw=none,fill=none,anchor=east] {$u_s$};
\node at (-0.2,1) [draw=none,fill=none,anchor=east] {$v_s$};
\node at (-0.2,0) [draw=none,fill=none,anchor=east] {$w_s$};
\node at (2.2,2) [draw=none,fill=none,anchor=west] {$u_t$};
\node at (2.2,1) [draw=none,fill=none,anchor=west] {$v_t$};
\node at (2.2,0) [draw=none,fill=none,anchor=west] {$w_t$};
\end{tikzpicture}
\caption{On the left, the resulting clique contains a spanner of the underlying bi-clique.
On the right, we are given a temporal clique and the corresponding constructed bi-clique (\cref{obs:biclique_to_clique}).}
\label{fig:bicliques}
\end{figure}

\begin{lemma}
\label{obs:biclique_to_clique}
\cref{thm:mainthm_biclique} implies that every temporal clique on $n$ vertices admits a spanner of size~$14n$.
\end{lemma}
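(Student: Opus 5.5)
Given a temporal clique $G=(V,\lambda)$ with $|V|=n$, I will build the bi-clique suggested by the right part of \cref{fig:bicliques}. Take $S=\{v_s: v\in V\}$ and $T=\{v_t: v\in V\}$. For $u\neq v$, give $\{u_s,v_t\}$ the label $\lambda(\{u,v\})$. Give each $\{v_s,v_t\}$ the label $0$, which is below every clique label (we may assume all clique labels are positive integers). Then \cref{thm:mainthm_biclique} yields a bi-clique spanner $E'$ of size at most $10n+2(n+n)=14n$. Project it back to $F=\{\{u,v\}: \{u_s,v_t\}\in E' \text{ or } \{v_s,u_t\}\in E',\ u\neq v\}$. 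Each bi-clique edge gives at most one clique edge, and the edges $\{v_s,v_t\}$ give none, so $|F|\le |E'|\le 14n$.

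The key step is showing that $F$ is a temporal spanner of $G$. Fix distinct $u,w\in V$. Then $E'$ contains a temporal path from $u_s$ to $w_t$. It alternates between sources and targets, $u_s=x^1_s, y^1_t, x^2_s, y^2_t,\dots, w_t$, with non-decreasing labels. I will map each vertex $x_s$ or $y_t$ to the clique vertex $x$ or $y$.

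\begin{itemize}
\item An edge $\{x_s,y_t\}$ with $x\neq y$ maps to the clique edge $\{x,y\}\in F$ with the same label.
\item An edge $\{x_s,x_t\}$ has label $0$ and maps to a repeated vertex, so it disappears once consecutive repeats are collapsed.
\end{itemize}

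The image is therefore a walk in $(V,F)$ whose remaining edge labels stay non-decreasing. Label-$0$ edges can only occur before any positive label, since labels are non-decreasing along the path. They therefore never break monotonicity, and at worst they sit at the start and collapse away. Hence the image is a temporal walk from $u$ to $w$ in the sense of the preliminaries, and it contains a temporal $u$-$w$-path in $F$.

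The main obstacle is the tie-breaking convention. The preliminaries assume that edges at a common vertex have distinct labels, but in the bi-clique the edges $\{u_s,v_t\}$ and $\{v_s,u_t\}$ share the label $\lambda(\{u,v\})$. I will handle this by keeping the raw labels and reasoning directly with non-decreasing labels, not positions. \cref{thm:mainthm_biclique} allows arbitrary finite label sets, and these two edges are not incident to a common vertex anyway. If a distinct-label form is needed inside the bi-clique, I would break ties there consistently with a fixed tie-break of $G$. Any temporal bi-clique path is still temporal under the original weak labels, so the projection argument goes through unchanged.
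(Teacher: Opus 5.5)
Your proposal is correct and follows essentially the same route as the paper. You use the same doubled-vertex bi-clique with label $0$ on each $\{v_s,v_t\}$, apply \cref{thm:mainthm_biclique} to get $10n+2(2n)=14n$ edges, project back, and observe that label-$0$ edges can only occur in a prefix of the path and collapse away. Your tie-breaking paragraph is harmless but unnecessary, and you are slightly more careful than the paper in noting that the projection is a temporal walk that contains a temporal path, since $x_s$ and $x_t$ can both appear on the bi-clique path.
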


\begin{proof}
  Given a temporal clique $G = (V, \lambda)$ on $n$ vertices with strictly positive edge
  labels, we construct a temporal bi-clique $G' = (S, T, \lambda')$ as follows
  (cf.~\cref{fig:bicliques}, right). For every $v \in V$, introduce a copy
  $v_S \in S$ and a copy $v_T \in T$. Set $\lambda'(\{v_S, v_T\}) := 0$ for all~$v$
  and $\lambda'(\{v_S, u_T\}) := \lambda(\{v, u\})$ for all $v \neq u$.

By \cref{thm:mainthm_biclique}, $G'$ admits a temporal spanner $E'$ of size $10\min(|S|,|T|) + 2(|S|+|T|) = 14n$.
Define $E^*\subset V^{(2)}$ to be its projection, that is,
 $
    E^* := \bigl\{\{u,v\} : \{u_S, v_T\} \in E', u \neq v\bigr\}.
 $
Then $|E^*| \leq |E'| \leq 14n$, and it remains to show that $E^*$ is a temporal spanner of $G$. 
Indeed, for every~$u, v \in V$, there exists a temporal $u_S$-$v_T$-path $P'$ in $E'$. 
Project~$P'$ vertex by vertex, mapping~$x_S$ and~$x_T$ both to $x$.
Edges of the form $\{x_S,x_T\}$ become consecutive repetitions and are collapsed (this can occur only in a prefix of $P'$, since all edge labels of $G$ are strictly positive). 
Every remaining edge corresponds to an edge of $E^*$, and the edge labels are preserved; hence the projection is a temporal path in $G$.
\end{proof}

For the remainder of the paper, we therefore focus on temporal bi-cliques.
In some scenarios, they can be easier to work with because they separate the roles of sources and targets. 
In a clique, every vertex plays both roles simultaneously.

\subsection{Dismountability and extremally matched bi-cliques}

\newcommand{\minneighbor}{N_\mathrm{min}}
\newcommand{\maxneighbor}{N_\mathrm{max}}

In \cite{Casteigts21}, Casteigts, Peters, and Schoeters introduced the concept of \emph{dismountability}, which was adapted to bi-cliques in \cite{Angrick24} and revisited in \cite{Carnevale25}.
It is a simple yet powerful tool based on the following idea (cf.~\cref{fig:dismountability}, left):
Let $s\in S$, $t:=\minneighbor(s)$, and suppose there exists~$s'\in S$ with $\pos_t(s')< \pos_t(s)$.
Then the concatenation of the path~$(s',t,s)$ with any temporal path from~$s$ yields a temporal walk.
Therefore, if a set of edges $E'$ covers $(s,T)$ (i.e., covers $(s,v)$ for every~$v\in T$), then
$E'\cup \{\{s,t\}, \{s',t\}\}$ also covers $(s',T)$.
This means that we can delete $s'$ from the graph, find a spanner of the remaining graph, and then reintroduce $s'$ together with the edges $\{s,t\}$ and $\{s',t\}$.
Since we add only a constant number of edges (two) per deleted source, this is compatible with the goal of finding linear-sized spanners.
In this case, we say that $s'$ is \emph{dismountable}\footnote{
  The exact definitions of \emph{dismountable} and \emph{extremally matched}
  vary slightly throughout the literature.
  The definitions given here are tailored to our purposes.}
\emph{via $(s,t)$}.

\begin{figure}
\centering
\begin{tikzpicture}[scale=0.5, every node/.style={font=\small, circle, draw, fill, inner sep=0pt, minimum size=5pt}]
\node (s1) at (0,2) {};
\node (t) at (3,1) {};
\node (s) at (0,0) {};
\draw[red, thick] (s1) to (t) to (s);
\foreach \a in {0,10,20,30,40} {
    \draw (s) -- ++(-\a:2.5);
  }
\node at (-0.6,2) [draw=none, fill=none] {$s'$};
\node at (-0.6,0) [draw=none, fill=none] {$s$};
\node at (3.4,1) [draw=none, fill=none, anchor=west] {$t=\minneighbor(s)$};
\draw[thick, ->] (2.5, 1.5) to [bend right=30] (2.5,0.4);
\draw[thick, ->] (0.5, 0.5) to [bend left=30] (0.4,-0.9);
\end{tikzpicture}
\hspace{2cm}
\begin{tikzpicture}[scale=0.5, every node/.style={font=\small, circle, draw, fill, inner sep=0pt, minimum size=5pt}]
\node (s1) at (0,-2) {};
\node (t) at (-3,-1) {};
\node (s) at (0,0) {};
\draw[red, thick] (s1) to (t) to (s);
\foreach \a in {0,10,20,30,40} {
    \draw (s) -- ++(180-\a:2.5);
  }
\node at (0.6,-2) [draw=none, fill=none] {$t'$};
\node at (0.6,0) [draw=none, fill=none] {$t$};
\node at (-3.4,-1) [draw=none, fill=none, anchor=east] {$s=\maxneighbor(t)$};

\draw[thick, <-] (-2.5, -1.6) to [bend right=30] (-2.5,-0.5);
\draw[thick, <-] (-0.5, -0.6) to [bend left=30] (-0.4,0.8);
\end{tikzpicture}
\caption{On the left, $s'$ is dismountable via $(s,t)$. On the right, $t'$ is dismountable via $(s,t)$.}
\label{fig:dismountability}
\end{figure}

Note that, if no dismountable sources exist, then for every $s \in S$ we have $\minneighbor(\minneighbor(s))=s$.
In particular, no two sources share the same smallest neighbor, and hence $|S|\leq |T|$.
The same concept applies to targets (cf.~\cref{fig:dismountability}, right):
given $t,t'\in T$ with $s:=\maxneighbor(t)$ and $\pos_s(t')> \pos_s(t)$,
one can extend any temporal walk ending in $t$ by appending the path $(t,s,t')$, allowing us to delete~$t'$, find a spanner of the remaining graph, and reintroduce $t'$ together with the edges $\{s,t\}$ and $\{s,t'\}$.
We then say that $t'$ is \emph{dismountable via $(s,t)$}.
Analogously, this yields a graph in which, for every~$t \in T$ we have $\maxneighbor(\maxneighbor(t))=t$, and in particular $|T|\leq|S|$.

It follows that, if there are neither dismountable sources nor dismountable targets, the bi-clique satisfies the following properties, and we call it \emph{extremally matched}.
\begin{itemize}
  \item $\minedges:=\{\{s,\minneighbor(s)\}:s \in S\}=
        \{\{t,\minneighbor(t)\}:t \in T\}$ is a perfect matching.
  \item $\maxedges:=\{\{s,\maxneighbor(s)\}:s \in S\}=
        \{\{t,\maxneighbor(t)\}:t \in T\}$ is a perfect matching.
\end{itemize}
In particular, extremally matched bi-cliques satisfy $|S|=|T|$.
We summarize these findings in the following result.
Here, given $S'\subset S, T'\subset T$, we denote by $G[S',T']:=(S',T',\lambda|_{\{\{s,t\}\,:\,s\in S',\,t\in T'\}})$ the induced temporal subgraph.

\begin{lemma}[Dismountability]
\label{lem:dismountable}
Given a temporal bi-clique $G=(S,T,\lambda)$, there exist $S'\subseteq S$, $T'\subseteq T$, and a set $E'$ of at most $2(|S|+|T|-|S'|-|T'|)$ edges such that $G[S',T']$ is extremally matched and, if $E^*$ is a temporal spanner of $G[S',T']$, then $E^*\cup E'$ is a temporal spanner of $G$.
\end{lemma}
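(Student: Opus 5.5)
We prove \cref{lem:dismountable} by induction on $|S|+|T|$, deleting one dismountable vertex at a time. If $G$ is already extremally matched, we take $S'=S$, $T'=T$ and $E'=\emptyset$.

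Otherwise $G$ contains a dismountable source or a dismountable target. By the discussion preceding the statement, this holds as soon as some source $s$ has $\minneighbor(\minneighbor(s))\neq s$, or some target $t$ has $\maxneighbor(\maxneighbor(t))\neq t$. Indeed, if no such vertex exists, then $\minedges$ and $\maxedges$ are perfect matchings, which is exactly the definition of extremally matched. Here the fact that $\minedges$ is a perfect matching follows from $\minneighbor\circ\minneighbor=\mathrm{id}$ on $S$, which makes $\minneighbor|_S$ injective, together with the dual statement for targets.

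Concretely, suppose $s\in S$ and $t:=\minneighbor(s)$ satisfy $s^*:=\minneighbor(t)\neq s$. Then $\pos_t(s^*)=0<\pos_t(s)$, so $s^*$ is dismountable via $(s,t)$, and we delete $s^*$. One caveat: deleting $s^*$ leaves a smaller bi-clique only if both sides remain non-empty. Since $s\neq s^*$, the source $s$ survives, and $T$ is untouched, so this is fine. The target case is symmetric: if $t\in T$ and $s:=\maxneighbor(t)$ satisfy $t^*:=\maxneighbor(s)\neq t$, then $t^*$ is dismountable via $(s,t)$, and $t$ survives its deletion.

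Let $G_1$ be $G$ with the dismounted vertex removed. Apply the induction hypothesis to $G_1$ to obtain $S'$, $T'$ and $E'_1$ with $|E'_1|\le 2(|S_1|+|T_1|-|S'|-|T'|)$. Set $E':=E'_1\cup\{\{s,t\},\{s^*,t\}\}$ in the source case, or $E'_1\cup\{\{s,t\},\{s,t^*\}\}$ in the target case. The size bound then follows immediately, since $|S_1|+|T_1|=|S|+|T|-1$.

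For correctness, let $E^*$ be a spanner of $G[S',T']$. By induction, $E^*\cup E'_1$ is a spanner of $G_1$. We must check that every pair involving the new vertex is covered in $G$. In the source case, take any $v\in T$. Since $s$ remains in $G_1$, there is a temporal $s$-$v$-path in $E^*\cup E'_1$. Prepending $(s^*,t,s)$ gives a temporal walk, because $\pos_t(s^*)<\pos_t(s)$ and $\{s,t\}$ is the earliest edge at $s$. This walk possibly revisits $t$, which is why the paper uses walks. All other pairs are covered already within $G_1$, since temporal paths in $G_1$ remain temporal in $G$. The target case is symmetric: we append $(t,s,t^*)$ to any path ending in $t$, using that $\{s,t\}$ is the latest edge at $s$ and $\pos_s(t^*)>\pos_s(t)$.

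The only subtle point is that the earlier reasoning requires the pivot vertex ($s$ or $t$) to survive into $G_1$. This is guaranteed because the pivot differs from the vertex being deleted. With that in place, the induction closes.
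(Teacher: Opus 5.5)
Your proof is correct and follows essentially the same route as the paper. The paper phrases it as an algorithm that repeatedly deletes a dismountable vertex and adds the two pivot edges; your induction on $|S|+|T|$ is that same argument written out, and your extra checks are exactly what the paper leaves to ``by construction'': that the absence of dismountable vertices forces both extremal matchings to be perfect, that the pivot survives the deletion, and the temporal-walk verification.
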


\begin{proof}
Consider the algorithm that initializes $S'\coloneqq S$, $T'\coloneqq T$, $E'\coloneqq\emptyset$ and repeatedly adjusts these sets as follows.
Throughout, dismountability refers to dismountability in the current graph~$G[S',T']$.
\begin{itemize}
  \item If $s'\in S'$ is dismountable via $(s,t)$, remove $s'$ from $S'$
        and add $\{s,t\}$ and $\{s',t\}$ to $E'$.
  \item If $t'\in T'$ is dismountable via $(s,t)$, remove $t'$ from $T'$
        and add $\{s,t\}$ and $\{s,t'\}$ to $E'$.
\end{itemize}
The algorithm terminates once no dismountable vertices remain, and the resulting $S'$, $T'$, $E'$ satisfy the claimed properties by construction.
\end{proof}

This allows us to restrict our attention to extremally matched bi-cliques.
More precisely, we establish next that it suffices to prove the following.

\begin{restatable}{theorem}{mainthmextremallymatched}
\label{thm:mainthm_extremally_matched}
Every extremally matched temporal bi-clique $(S,T,\lambda)$ with $n:=|S|=|T|$ contains a temporal spanner of size $14n$.
\end{restatable}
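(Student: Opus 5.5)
The plan is to prove \cref{thm:mainthm_extremally_matched} by a \emph{peeling} argument combined with \cref{lem:dismountable}. The key step I would aim for is the following claim. In every extremally matched bi-clique with $n=|S|=|T|$, we can find subsets $S_0\subseteq S$ and $T_0\subseteq T$ with $|S_0|+|T_0|=k\ge 1$, and an edge set $F$ with $|F|\le c\,k$, such that the following holds: if $E^*$ is a spanner of $G[S\setminus S_0,T\setminus T_0]$, then $E^*\cup F$ is a spanner of $G$.

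Given this, the theorem follows by strong induction on $n$. Apply the claim, then \cref{lem:dismountable} to $G[S\setminus S_0,T\setminus T_0]$, removing a further $k'$ vertices at cost $2k'$. This leaves an extremally matched bi-clique with $n'$ vertices on each side, where $2n'=2n-k-k'$. The induction hypothesis gives a spanner of size $14n'$. So the total is at most $14n'+ck+2k'\le 14n$ as long as $c\le 7$, since $14(n-n')=7(k+k')$.

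To build $F$, I would start from the \emph{simple star}. Fix a source $c$ and take all edges $\{c,t\}$, $t\in T$, together with the perfect matching $\minedges$. A source $s$ uses $(s,\minneighbor(s),c,t)$, which is temporal because $\{s,\minneighbor(s)\}$ is the earliest edge at $\minneighbor(s)$. This path reaches every $t$ with $\pos_c(t)>\pos_c(\minneighbor(s))$. Symmetrically, a star at a target with $\maxedges$ covers the complementary type of pairs. Averaging over the choice of centre shows such a structure covers about half of $S\times T$ with $2n$ edges.

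The proof then splits into two regimes.
\begin{itemize}
\item \textbf{Many appendable paths.} If $\Omega(n)$ further ``appendable'' paths (short detours reusing the star) can be added with $O(1)$ edges each, then a linear-size block of uncovered pairs becomes covered. We then peel the corresponding vertices.
\item \textbf{Close to a shifted matching graph.} Otherwise the structure is rigid, as in the shifted matching graph of~\cite{Angrick24}. In that case I would pair a source-centred and a target-centred star, each enlarged by its appendable paths (an ``extended star''). Together these two should cover all pairs involving a constant fraction of the vertices.
\end{itemize}

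The main obstacle is the intermediate regime: defining the extended stars so that one dichotomy provably holds with constants small enough to keep $c\le 7$. The dichotomy I want is this: either the two extended stars jointly cover every pair touching some set of $k$ vertices at cost $O(k)$, or some simple star admits $\Omega(n)$ cheap appendable paths.

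I would first try to define an extended star at $c$ greedily. Add a matched edge $\{s,\minneighbor(s)\}$ plus one extra edge whenever doing so extends $s$'s reach past the star. Then I would use a charging argument: every source that is not extended witnesses an inequality forcing local shifted-matching structure. Globally, these forced structures let the opposite-side star cover that source. If the constants come out larger than $7$ per peeled vertex, I would accept a weaker constant first and then tighten the accounting.
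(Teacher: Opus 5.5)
Your proposal is a plan rather than a proof. The step you flag as ``the main obstacle'' (defining extended stars so that a dichotomy provably holds) is the entire content of the theorem, and the greedy definition plus ``charging argument'' gives no reason it would work.

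\textbf{The missing construction.} In the $s^*$-ordered labeling, $\ext(s^*)$ consists of $\minedges$, $\maxedges$, the star at $s^*$, and one extra edge $\{s,t_{\extind(s)}\}$ per source $s\neq s^*$, where $\extind(s)=\max\{l:\pos_{t_l}(s)>\pos_{t_l}(s^*)\}$. The star at a target centre $t^*$ is defined symmetrically. The walk $(s_j,t_j,s^*,t_i,\maxmat(t),t)$ certifies what $\ext(s^*)$ covers. Contrapositively, if $\ext(s^*)\cup\ext(t^*)$ misses $(s,t)$, then $(\minmat(t^*),t^*,\maxmat(t),\minmat(s),s^*)$ is temporal (\cref{cor:twoside_uncovered}).

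\textbf{The missing trick is how the target centre is chosen.} Fix the cut $k=\lfloor n/2\rfloor$. If some $s_i$ with $i>k$ is not 3-hop-$k$-cut-crossing, take $t^*:=t_i=\minmat(s_i)$. Any pair $(s_j,t)$ with $j\le k$ left uncovered would then produce the walk $(s_i,t_i,\maxmat(t),t_j,s^*)$, making $s_i$ crossing after all. So a single non-crossing source already forces $\ext(s^*)\cup\ext(t_i)$ to cover all of $(S',T)$. Otherwise \emph{every} source beyond the cut crosses, and the simple star plus those paths plus $\maxedges$ covers $(S,T')$.

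\textbf{Why your ``many appendable paths'' regime does not fit your own framework.} Peeling requires $F$ to take care of every pair touching a peeled vertex, because an arbitrary spanner of the remainder gives no control over times except through extremal edges, as in dismountability. Your regime only covers a block such as $(S'\cup S_{\mathrm{cross}},T')$. That justifies deleting no vertex at all, and would push you to the two-subproblem recursion of \cref{lem:recursion} with a worse constant. The paper's all-or-one dichotomy is built so that both branches cover a full strip, $(S,T')$ or $(S',T)$, with at most $6n$ edges (\cref{lem:final}).

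\textbf{Your accounting condition is too strong.} Requiring $c\le 7$ is more than necessary, and this construction does not meet it: it spends $6n$ edges to peel only about $n/2$ vertices, so $c\approx 12$. What rescues the bound is that one-sided peeling leaves sides of sizes $n$ and at most $\lfloor n/2\rfloor$. \cref{lem:dismountable} is therefore forced to delete $k'\ge k$ further vertices at cost $2$ each. With $k'\ge k$, your inequality $ck+2k'\le 7(k+k')$ holds for all $c\le 12$. This is exactly the paper's recursion $f(n)\le 9n-4x_n+f(x_n)$ with $x_n\le n/2$.
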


\begin{proof}[Proof that \cref{thm:mainthm_extremally_matched} implies \cref{thm:mainthm_biclique}]
Given a temporal bi-clique $(S,T,\lambda)$, by \cref{lem:dismountable} there exist $S'\subseteq S$ and $T'\subseteq T$ such that $G[S',T']$ is extremally matched, and a set $E'$ of at most $2(|S|+|T|-|S'|-|T'|)$ edges such that any temporal spanner of $G[S',T']$ can be extended to a temporal spanner of $G$ by adding the edges in $E'$.
By \cref{thm:mainthm_extremally_matched}, $G[S',T']$ admits a temporal spanner of size $14|S'|$. 
Adding the edges in $E'$ yields a temporal spanner of $G$ of size
\begin{equation*}
  14|S'| + 2(|S|+|T|-|S'|-|T'|)
  \overset{|S'|=|T'|}= 10|S'| + 2(|S|+|T|)
  \leq 10\min(|S|,|T|) + 2(|S|+|T|).\qedhere
\end{equation*}
\end{proof}

In the next two sections, we focus on proving \cref{thm:mainthm_extremally_matched}, which, as shown in this section, implies our
main results Theorem~\ref{thm:mainthm_biclique} and that every temporal clique contains a spanner of linear~size.

\section{Cuts in stars}
\label{sec:cuts_in_stars}

Throughout this section, let $G=(S,T,\lambda)$ be an extremally matched temporal bi-clique and assume as usual that no two incident edges have the same label such that $\pos_s(t)$ and $\pos_t(s)$ are well-defined for every $s\in S, t\in T$.
Many of the statements in this and the next section come in two analogous forms, one from the perspective of~$S$ and one from that of~$T$. 
We label such pairs of statements~(a) and~(b), and use~(i), (ii) for parts that are not analogous.
First, we introduce a canonical way to label the vertices of $G$ that we use throughout this and the next section.

\begin{definition}
\label{def:induced_labeling}
\leavevmode
\begin{enumerate}[topsep=2pt, label=(\alph*)]
\item 
\label{def:labeling_source}
For $s^* \in S$, the \emph{$s^*$-ordered labeling} (cf.~\cref{fig:extended_stars}, left) is the vertex labeling~$T = \{t_0, \dots, t_{n-1}\}$ and $S = \{s_0, \dots, s_{n-1}\}$  
where the $t_i$ are ordered such that
$\pos_{s^*}(t_i) < \pos_{s^*}(t_{i+1})$ for all $i\leq n-2$
and $s_i = \minmat(t_i)$ for all $i \leq n-1$.
\item 
\label{def:labeling_target}
For $t^* \in T$, the \emph{$t^*$-ordered labeling} (cf.~\cref{fig:extended_stars}, right) is the vertex labeling $T = \{t_0, \dots, t_{n-1}\}$ and $S = \{s_0, \dots, s_{n-1}\}$ 
where the $s_i$ are ordered such that 
$\pos_{t^*}(s_i)>\pos_{t^*}(s_{i+1})$ for all $i\leq n-2$ and
$t_i=\maxmat(s_i)$ for all $i\leq n-1$.
\end{enumerate}
\end{definition}

Note that, since the bi-clique is extremally matched, we have in the $s^*$-ordered labeling that $s_0=\minmat(t_0)=\minmat(\minmat(s^*))=s^*$.
Next, we define a simple set of edges already covering a large set of vertices.

\begin{definition}
\label{def:cut_crossing}
Let $s^*\in S$ and consider the $s^*$-ordered labeling.
\begin{enumerate}[label=(\roman*)]
\item The \emph{simple star from $s^*$}, denoted $\star(s^*)$ consists of $\minedges$ and all edges incident to~$s^*$.
\item Let $k\in \{0,\dots, n-2\}$. We say that a source $s_i$ is \emph{$m$-hop-$k$-cut-crossing for $s^*$} (cf.~\cref{fig:cut_star}) if $i>k$ and there exists $j\leq k$ and a temporal path of length at most $m+1$ that begins in $s_i$ and ends with the edge $\{t_j,s^*\}$.
\end{enumerate}
\end{definition}

\begin{figure}
\centering
\begin{tikzpicture}[scale=0.8, every node/.style={font=\small, circle, draw, fill, inner sep=0pt, minimum size=5pt}]
\foreach \i in {0,...,6}{
  \node (s\i) at (0,-\i) {};
\node (t\i) at (2,-\i) {};
\draw (s0) to (t\i);
\draw[dashed, gray] (s\i) to (t\i);
}
\node at (-0.3,0) [draw=none, fill=none, anchor=east] {$s^*=s_0$};
\node at (-0.3,-5) [draw=none, fill=none, anchor=east] {$s_i$};
\node at (-0.3,-3) [draw=none, fill=none, anchor=east] {$s_k$};
\node at (-0.3,-6) [draw=none, fill=none, anchor=east] {$s_{n-1}$};
\node at (2.3,0) [draw=none, fill=none, anchor=west] {$t_0$};
\node at (2.3,-2) [draw=none, fill=none, anchor=west] {$t_j$};
\node at (2.3,-4) [draw=none, fill=none, anchor=west] {$t_{k+1}$};
\node at (2.3,-6) [draw=none, fill=none, anchor=west] {$t_{n-1}$};
\node at (0.5,-5.83)  [draw=none, fill=none, font=\scriptsize, color=gray] {min};
\draw[thick, blue] (-2,-3.5) to (4,-3.5);
\node at (4.2, -3.5) [draw=none, fill=none, anchor=west, blue] {$k$-cut};
\node at (-2.3, -1.5) [draw=none, fill=none, anchor=east, blue] {$S'$};
\node at (3.7, -4.5) [draw=none, fill=none, anchor=west, blue] {$T'$};
\draw[thick, darkgreen] (s5) to (t2);
\draw[->,thick] (10:0.5) arc[start angle=10,end angle=-100,radius=0.5];
\draw[thick, darkgreen, ->] (1.9, -2.4) to [bend left=40] (1.9, -1.6);
\draw[thick, blue, decorate,decoration={brace,amplitude=6pt}]
    (-2,-3.2) -- (-2,0.2);
\draw[thick, blue, decorate,decoration={brace,amplitude=6pt}]
    (3.3,-2.8) -- (3.3,-6.2);
\end{tikzpicture}
\caption{A 1-hop-$k$-cut-crossing source $s_i$ (\cref{def:cut_crossing}).}
\label{fig:cut_star}
\end{figure}

The motivation for the definition above is as follows (cf.~\cref{fig:cut_star}): First, observe that, for every~$j\leq i$, $\star(s^*)$ covers the pair $(s_j, t_i)$ via the temporal walk $(s_j, t_j, s^*,t_i)$.
In particular, given some $k\in \{0,\dots, n-2\}$, we can set $S':=\{s_0, \dots, s_k\}$, $T':=\{t_{k}, \dots, t_{n-1}\}$, and obtain that~$\star(s^*)$ covers $(S',T')$, that is, it covers $(s,t)$ for every $s\in S', t\in T'$.

Further, observe that, if a source $s_i$ is $m$-hop-$k$-cut-crossing via the path $P$, then $\star(s^*)\cup P$ also covers~$(s_i,T')$ via the temporal walks given by $P\circ (s^*,t)$ for any $t \in T'$ (cf.~\cref{fig:cut_star}), where~$W_1 \circ W_2$ denotes the concatenation of the walks $W_1,W_2$.
In other words, $s_i$ being $m$-hop-$k$-cut-crossing means that we can add $s_i$ to $S'$ by additionally adding~$m$ edges to $\star(s^*)$ (those $m$ edges are the edges of $P$, where the last edge of $P$ is already contained in $\star(s^*)$).

We remark that the definition could be easily extended to incorporate more cases, such as targets crossing the cut, and the analogous case where the star center is a target. 
But since our proof does not need that generality, we state only the one-sided version here for brevity.

The following observation summarizes the above discussion.

\begin{observation}
\label{obs:cut_crossing}
Let $s^*\in S$, $m\in \N$, $k\in \{0,\dots, n-2\}$, and consider the $s^*$-ordered labeling.
Let $S_\mathrm{cross}$ denote the set of sources that are  $m$-hop-$k$-cut-crossing, $T':=\{t_{k}, \dots, t_{n-1}\}$, and~$S':=\{s_0, \ldots, s_k\}\cup S_\mathrm{cross}$.
Then there exists a set of $|S_\mathrm{cross}|\cdot m$ edges $E'$ such that $E'\cup \star(s^*)$ covers~$(S',T')$.
\end{observation}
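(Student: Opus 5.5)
For each $s_i\in S_\mathrm{cross}$ we fix a witnessing temporal path $P_i$ of length at most $m+1$ that starts in $s_i$ and ends with the edge $\{t_{j_i},s^*\}$ for some $j_i\le k$. We then let $E'$ be the union, over $s_i\in S_\mathrm{cross}$, of the edges of $P_i$ other than its final edge $\{t_{j_i},s^*\}$. Each $P_i$ contributes at most $m$ such edges, so $|E'|\le |S_\mathrm{cross}|\cdot m$. If some path is shorter, we pad $E'$ with arbitrary further edges to reach exactly $|S_\mathrm{cross}|\cdot m$ edges; adding edges never destroys coverage. The final edges are all incident to $s^*$ and therefore already lie in $\star(s^*)$.

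It remains to check coverage, in two cases. \textbf{First case:} $s_j\in\{s_0,\dots,s_k\}$ and $t_i\in T'$, so $j\le k\le i$. The walk $(s_j,t_j,s^*,t_i)$ uses $\{s_j,t_j\}\in\minedges$ (since $s_j=\minmat(t_j)$) and two edges incident to $s^*$, all in $\star(s^*)$. It is temporal for the following reasons.
\begin{itemize}
\item At $t_j$ we have $\pos_{t_j}(s_j)=0$, which is at most $\pos_{t_j}(s^*)$.
\item At $s^*$ we have $\pos_{s^*}(t_j)\le\pos_{s^*}(t_i)$ because $j\le i$.
\end{itemize}
Collapsing repetitions (for instance $j=0$, where $s_0=s^*$, or $i=j$) keeps the walk temporal under the non-strict walk convention.

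\textbf{Second case:} $s_i\in S_\mathrm{cross}$ and $t\in T'$, say $t=t_\ell$ with $\ell\ge k$. Take the walk $P_i\circ(s^*,t_\ell)$. Its prefix $P_i$ is temporal and lies in $E'\cup\star(s^*)$. At the junction vertex $s^*$ we need $\pos_{s^*}(t_{j_i})\le\pos_{s^*}(t_\ell)$, which holds because $j_i\le k\le \ell$ and the labeling orders targets by $\pos_{s^*}$. The added edge $\{s^*,t_\ell\}$ lies in $\star(s^*)$. The resulting temporal walk contains a temporal $s_i$-$t_\ell$-path.

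There is no real obstacle here. The only point needing care is the junction condition at $s^*$, which comes directly from the ordering in \cref{def:induced_labeling}\ref{def:labeling_source} together with $j\le k\le \ell$, and the degenerate index cases handled by the walk convention.
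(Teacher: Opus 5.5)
Your proof is correct and is essentially the paper's argument: $\star(s^*)$ covers $(s_j,t_i)$ for $j\le k\le i$ via the walk $(s_j,t_j,s^*,t_i)$. Each crossing source $s_i$ is then handled by $P_i\circ(s^*,t)$, costing only the at most $m$ edges of $P_i$ before its final edge, since that final edge already lies in $\star(s^*)$. Drop the padding step: the statement is meant, and later used, as the upper bound ``at most $|S_\mathrm{cross}|\cdot m$'', and padding to exactly that many edges can even be impossible when $m$ is large compared with the $n^2$ edges available.
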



\subsection{Intuition for proof of {\cref{thm:mainthm_extremally_matched}}}
\label{sec:intuition}

In this subsection, we explain the intuition behind our proof of \cref{thm:mainthm_extremally_matched}.
The two results in this subsection are not needed for the proof, but serve to motivate the proof structure used in the following section.
We state them here only giving the main ideas, deferring the full proofs to \cref{sec:shifted_matching_proofs}.
A reader interested only in a direct proof of \cref{thm:mainthm_extremally_matched} may skip to \cref{sec:extended_stars}.

The next lemma shows that \cref{thm:mainthm_extremally_matched} follows if a linear number of edges covers a sufficiently large subgraph.

\begin{restatable}{lemma}{recursionlemma}
\label{lem:recursion}
Suppose there exist constants $C\geq 1$, $\delta \in (0,1]$ such that, for every extremally matched temporal bi-clique $G=(S,T,\lambda)$ of size $n:=|S|$, there exist $S'\subset S$, $T'\subset T$ and a set $E'$ of at most~$Cn$ edges satisfying
\begin{enumerate}[label=(\roman*)]
    \item $E'$ covers $(S',T')$,
    \item $|S'|+|T'|\geq (1+\delta) n$.
\end{enumerate}
Then every temporal bi-clique contains a spanner of size $\bigO(n)$.
\end{restatable}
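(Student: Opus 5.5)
We prove this by a recursion in which each round shrinks the problem by a constant factor, and we bound the total cost by a geometric series. By \cref{lem:dismountable} and the argument following \cref{thm:mainthm_extremally_matched}, it suffices to show that every extremally matched bi-clique with $|S|=|T|=n$ has a spanner of size at most $C'n$ for some constant $C'$. Indeed, the dismounting step adds at most $2(|S|+|T|)$ edges, and the clique case then follows via \cref{obs:biclique_to_clique}. So let $f(n)$ denote the maximum, over all temporal bi-cliques $H$ with $|S_H|+|T_H|\le 2n$, of the minimum spanner size of $H$. We aim to show $f(n)=\bigO(n)$ by strong induction.

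\textbf{The recursive step.} Given an extremally matched $G$ of size $n$, take $S',T',E'$ from the hypothesis. Write $\bar S:=S\setminus S'$ and $\bar T:=T\setminus T'$. The pairs in $S'\times T'$ are covered by $E'$. Every remaining pair lies in $S\times \bar T$ or in $\bar S\times T$.

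We handle these two bi-cliques separately, recursing on $G[S,\bar T]$ and $G[\bar S,T]$. A spanner of each induced bi-clique is a set of edges of $G$ that covers all of its pairs. Hence
\[E'\cup E_1\cup E_2\]
is a spanner of $G$, where $E_1$ and $E_2$ are spanners of the two induced bi-cliques.

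To recurse, we first apply \cref{lem:dismountable} to each of the two bi-cliques. For $G[S,\bar T]$ this leaves an extremally matched core of size at most $\min(n,|\bar T|)=|\bar T|$, at a cost of at most $2(n+|\bar T|)$ edges. Write $a:=|\bar S|$ and $b:=|\bar T|$. Condition (ii) gives
\[a+b\le (1-\delta)n.\]

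\textbf{Why this does not close directly, and the fix.} Let $g(m)$ be the worst-case spanner size for extremally matched bi-cliques of size $m$. The recursion above gives
\[g(n)\le Cn+4n+2(a+b)+g(a)+g(b).\]
The term $4n$ comes from dismounting against the full side $S$ or $T$, which has size $n$ and does not shrink. So a cost linear in $n$ is paid at every level.

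This is harmless in aggregate. The core sizes satisfy $a+b\le(1-\delta)n$, so the sum of core sizes over all nodes at recursion depth $d$ is at most $(1-\delta)^d n$. The trouble is that the per-node cost is charged against the parent's size. At a node of size $m$, the total cost is at most $(C+8)m$, and this includes the dismounting of both children, since each child bi-cliques has at most $2m$ vertices. Summing over depths gives
\[\sum_{d\ge 0}(C+8)(1-\delta)^d n=\frac{(C+8)n}{\delta}.\]

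The induction hypothesis is therefore $g(m)\le K m$ with $K:=(C+8)/\delta$. The inductive step reads
\[g(n)\le (C+8)n+K(a+b)\le (C+8)n+K(1-\delta)n= Kn.\]
Base cases are trivial, since a bounded size gives a bounded spanner.

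\textbf{Main obstacle.} The delicate point is checking that the dismounting of the side that is \emph{not} reduced (all of $S$ in $G[S,\bar T]$) costs only $\bigO(n)$ rather than accumulating. This holds because \cref{lem:dismountable} charges 2 edges per removed vertex, and at most $2n$ vertices are removed in each child. We also need that a spanner of an induced bi-clique covers exactly the pairs required of it.

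If $\bar T$ is empty, the corresponding part is vacuous. The degenerate cases $S'=S$ or $T'=T$ are likewise handled by simply skipping that branch.
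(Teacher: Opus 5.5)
Your proof is correct and follows essentially the same route as the paper. You dismount the two leftover bi-cliques $G[S\setminus S',T]$ and $G[S,T\setminus T']$ at a cost of at most $8n$ edges, recurse on extremally matched cores of total size at most $(1-\delta)n$, and close $f(n)\le (C+8)n+f(a)+f(b)$ by strong induction with the bound $\frac{C+8}{\delta}n$. The geometric-series discussion is unnecessary, since the direct induction already closes, and your first definition of $f$ (over all bi-cliques with at most $2n$ vertices) is never used.
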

\begin{proof}[Proof sketch]
It remains to cover $(S\setminus S',T)$ and $(S,T\setminus T')$. By dismountability (\cref{lem:dismountable}), each reduces to an extremally matched instance of size at most $|S\setminus S'|$, respectively $|T\setminus T'|$, at the cost of linearly many additional edges. The total size of the remaining instances is therefore $|S\setminus S'|+|T\setminus T'|=2n-|S'|-|T'|\leq (1-\delta)n$.
With this, a straightforward recursion yields a spanner of size $\bigO(n)$, where the constant factor hidden in the $\bigO$-notation depends on $C$ and $\delta$.
\end{proof}

An immediate consequence of \cref{obs:cut_crossing} is the following: if there exists a constant $m\in \mathbb{N}$, such that every extremally matched temporal bi-clique $(S,T,\lambda)$ admits some $s^*\in S$ and $k\in \{0, \dots, n-2\}$ with $\Omega(n)$ many $m$-hop-$k$-cut-crossing sources, then the assumptions of \cref{lem:recursion} are satisfied, and \cref{thm:mainthm_extremally_matched} follows.
One might hope to prove this in general, but this fails: we prove that the classical \emph{shifted matching graphs} provide a counterexample.
The shifted matching graph of size~$n$ is defined by (see \cref{fig:shifted_matching}) $S:=\{s_0, \dots, s_{n-1}\}$, $T:=\{t_0, \dots, t_{n-1}\}$, and ${\lambda(\{s_i,t_j\}):=j-i \bmod n}$.
Shifted matching graphs were introduced in \cite{Angrick24} alongside a notion called \emph{edge pivotability}, closely related to our definition of crossing cuts in stars. There, the authors showed that edge-pivotable temporal bi-cliques admit linear spanners, while the shifted matching graphs witness that not every temporal bi-clique is edge-pivotable. It is therefore unsurprising that they also fail to have~$\Omega(n)$ crossing sources.

We prove something stronger, however: having no 1-hop-$k$-cut-crossing sources for any $s^*$ and~$k$ characterizes the shifted matching graph up to isomorphism. Here, two temporal bi-cliques $(S_1,T_1,\lambda_1)$ and $(S_2,T_2,\lambda_2)$ are \emph{isomorphic} if there exist bijections $\Psi_S \colon S_1 \to S_2$ and $\Psi_T\colon T_1 \to T_2$ such that $\pos_{s}(t)=\pos_{\Psi_S(s)}(\Psi_T(t))$ and $\pos_{t}(s)=\pos_{\Psi_T(t)}(\Psi_S(s))$ for all $s\in S_1$, $t\in T_1$.

\begin{restatable}{proposition}{shiftedmatching}
Let $G=(S,T,\lambda)$ be an extremally matched temporal bi-clique with $n:=|S|$. Then~$G$ is isomorphic to the shifted matching graph on $n$ vertices if and only if, for every $s^*\in S$ and every $k\in\{0,\dots,n-2\}$, there is no 1-hop-$k$-cut-crossing source.
\end{restatable}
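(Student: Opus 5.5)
The forward direction is a direct computation. The converse I would try to get by reformulating the condition label-free, turning it into a counting inequality between $\pos_t(s)$ and $\pos_s(t)$, and then trying to force equality; that last step is the part I am least sure of.

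\textbf{Reformulation.} A $1$-hop-$k$-cut-crossing source $s_i$ for $s^*=s_0$ needs a temporal path of length at most $2$ from $s_i$ ending with the edge $\{t_j,s^*\}$. Since $i>k\geq 0$, we have $s_i\neq s^*$, so this path must be $(s_i,t_j,s^*)$. It is temporal iff $\pos_{t_j}(s_i)<\pos_{t_j}(s^*)$. Since $k$ ranges over $\{0,\dots,n-2\}$, the condition ``no crossing source for any $s^*,k$'' says: for every $s^*$ and all $0\le j<i\le n-1$ in the $s^*$-ordered labeling, $\pos_{t_j}(s^*)<\pos_{t_j}(s_i)$. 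Write $m(s):=\minmat(s)$. In the $s^*$-ordered labeling $s_i=\minmat(t_i)$, and by extremal matching $\minmat(s_i)=t_i$, so $m(s_i)=t_i$. The condition therefore becomes:
\[
\pos_{s^*}(t)<\pos_{s^*}(m(s)) \ \Longrightarrow\ \pos_t(s^*)<\pos_t(s)\qquad\text{for all } s^*,s\in S,\ t\in T.
\]
This formulation is invariant under isomorphism, which is what both directions use.

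\textbf{Shifted matching graph has no crossing source.} Here $\lambda(\{s_i,t_j\})=j-i\bmod n$, so labels at each vertex are distinct, $\minmat(t_j)=s_j$ and $\minmat(s_i)=t_i$. Fix $s^*=s_a$. Its targets in increasing order are $t_a,t_{a+1},\dots$, so the $s^*$-ordered labeling is the index shift by $a$. Take $j<i$ (shifted indices). At $t_{a+j}$, the source $s^*=s_a$ has label $j$ and $s_{a+i}$ has label $n-(i-j)$. Since $i\le n-1$, we get $n-(i-j)>j$, so no crossing source exists.

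\textbf{Converse, step 1: counting inequality.} Fix $s^*$ and a target $t\neq m(s^*)$. Every source $s\notin\{m(t),s^*\}$ that precedes $s^*$ at $t$ must, by the contrapositive of the reformulated condition, satisfy $\pos_{s^*}(m(s))<\pos_{s^*}(t)$. Since $m$ is a bijection $S\to T$, this gives
\[
\pos_t(s^*)\le 1+\pos_{s^*}(t).
\]

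\textbf{Step 2: force equality.} The plan is to strengthen this to a rigid relation for all pairs. For the shifted matching graph this relation is $\pos_t(s)=\pos_s(t)$, so I would aim to prove exactly that. The steps I would try, in order:
\begin{enumerate}
\item Sum the inequality over all pairs. Both $\sum_{s,t}\pos_t(s)$ and $\sum_{s,t}\pos_s(t)$ equal $n\binom n2$.
\item Use the remaining structure to remove the $+1$ slack. This structure is: the case $t=m(s^*)$, where $\pos_t(s^*)=0$; the case $t=\maxmat(s^*)$, which is never constrained; and the perfect matching $\maxedges$.
\item Aim for an inequality in the opposite direction, or a strict saving of at least one per target, so that all terms become equalities.
\end{enumerate}

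\textbf{Step 3: rebuild the isomorphism.} Once $\pos_t(s)=\pos_s(t)$ holds for all pairs (or some equivalent rigid relation), fix $s^*$ and its ordered labeling. Then show inductively that $\pos_{t_j}(s_i)=(j-i)\bmod n$, which gives the isomorphism to the shifted matching graph. For the induction, read off the order at $t_j$: $s_j$ comes first, and the condition dictates where $s^*$ sits relative to $s_{j+1},\dots,s_{n-1}$. Then re-apply the condition with $s^*$ replaced by $s_1,s_2,\dots$ to pin down the order cyclically.

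\textbf{Main obstacle.} I expect step 2 to be the hard part. The naive count only gives $\pos_t(s^*)\le\pos_{s^*}(t)+1$, and the $+1$ slack is exactly where extra configurations could hide. If the global summation does not close the gap, I would instead argue directly: show that $\maxmat(s_i)=t_{i-1}$ (indices mod $n$) for every choice of $s^*$, using the unconstrained last target. This is the cyclic shift structure, and the full label order should then follow by induction on the shift.
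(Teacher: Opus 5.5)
Your forward direction and your label-free reformulation are correct. Your outline for the converse (a per-pair inequality, summation to force $\pos_s(t)=\pos_t(s)$, then reconstruction of the cyclic order) is the paper's route. However, the converse is not actually proved. Step 2 is only a list of things to try, and the slack it is meant to remove comes from a miscount in Step 1.

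The set $\{s\in S:\pos_{s^*}(m(s))<\pos_{s^*}(t)\}$ has exactly $\pos_{s^*}(t)$ elements. For $t\neq m(s^*)$ it contains $s^*$ itself, since $\pos_{s^*}(m(s^*))=0<\pos_{s^*}(t)$. As $s^*$ does not precede itself at $t$, at most $\pos_{s^*}(t)-1$ sources other than $m(t)$ precede $s^*$ at $t$. Hence $\pos_t(s^*)\le\pos_{s^*}(t)$ with no $+1$, and this holds trivially for $t=m(s^*)$. Summing over all pairs, where both sides total $n\binom{n}{2}$, forces $\pos_t(s)=\pos_s(t)$ everywhere. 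This is exactly the paper's count, phrased there as follows: at $t_j$ the $n-1-j$ sources $s_{j+1},\dots,s_{n-1}$ all lie after $s_0$, so $\pos_{t_j}(s_0)\le j$. With your $+1$ bound the summation yields nothing, and you never show how the structures you list would recover the missing unit. As written, the argument stops there.

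Step 3 is also only sketched. Two facts make it work.

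\emph{First}, once $\pos_s(t)=\pos_t(s)$ holds, $s_0$ sits at position $j$ at $t_j$, and $s_{j+1},\dots,s_{n-1}$ fill positions $j+1,\dots,n-1$. So $\maxmat(t_j)\in\{s_{j+1},\dots,s_{n-1}\}$ for $j\le n-2$. Since $\maxmat(t_{n-1})=s_0$ (as $t_{n-1}=\maxmat(s_0)$ and $\maxedges$ is a perfect matching) and $\maxmat$ is a bijection, downward induction gives $\maxmat(t_j)=s_{j+1\bmod n}$. Hence $\minmat(\maxmat(s_i))=s_{i-1\bmod n}$.

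\emph{Second}, in your reformulated condition take $s:=\minmat(\maxmat(s^*))$. Then $m(s)=\maxmat(s^*)$ is last at $s^*$, so the hypothesis holds for every $t\neq\maxmat(s^*)$, giving $\pos_t(s^*)<\pos_t(\minmat(\maxmat(s^*)))$. Applying this with each $s_i$ as center yields $\pos_{t_j}(s_i)<\pos_{t_j}(s_{i-1\bmod n})$ for all $i\not\equiv j+1\pmod n$. This is a single chain $s_j,s_{j-1},\dots,s_0,s_{n-1},\dots,s_{j+1}$ through all sources at $t_j$. It forces $\pos_{t_j}(s_i)=j-i\bmod n$, and together with $\pos_s(t)=\pos_t(s)$ the isomorphism follows. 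This is how the paper finishes.
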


However, as noted in \cite{Angrick24}, the shifted matching graph admits a linear spanner (see \cref{fig:shifted_matching}, right): let $E'$ consist of $\minedges$, $\maxedges$, all edges incident to $s_0$, and all edges incident to $t_0$. For $i\leq j$, the pair $(s_i,t_j)$ is covered by the temporal walk $(s_i,t_i,s_0,t_j)$; for $i > j$, by $(s_i,t_0,s_{j+1},t_j)$.

To summarize, if a temporal bi-clique has a simple star and a $k$-cut with $\Omega(n)$ crossing sources, \cref{obs:cut_crossing} together with \cref{lem:recursion} yields a linear spanner. If no $k$-cut has any crossing sources for any star, the graph is isomorphic to a shifted matching graph, which also admits a linear spanner. It remains to handle the intermediate cases.
To this end, we introduce in the next section the notion of \emph{extended stars}. These are simple stars augmented by a linear number of edges that capture many cut-crossing paths. We then show that either some simple star has many cut-crossing sources, or there exist two extended stars, one centered at a source and one at a target, similarly to the spanner structure of the shifted matching graph, that together cover a sufficiently large subgraph.

\section{Extended stars}
\label{sec:extended_stars}

This section contains our main technical contribution. We introduce a carefully constructed structure of $\bigO(n)$ edges called an \emph{extended star} and prove that either two such structures together cover a sufficiently large subset of vertices, or some simple star has sufficiently many $3$-hop-$k$-cut-crossing sources. This then allows us to conclude that temporal cliques admit spanners of linear size.

Throughout this section, let $G=(S,T,\lambda)$ be an extremally matched temporal bi-clique of size~$n:=|S|=|T|$, and assume that no two incident edges share a label, such that $\pos_s(t)$ and $\pos_t(s)$ are well-defined for every $s\in S, t\in T$.
We begin by defining extended stars.

\begin{figure}
\centering
\begin{tikzpicture}[scale=0.8, every node/.style={font=\small, circle, draw, fill, inner sep=0pt, minimum size=5pt}]
\foreach \i in {0,...,6}{
  \node (s\i) at (0,-\i) {};
\node (t\i) at (2,-\i) {};
\draw (s0) to (t\i);
\draw[dashed, gray] (s\i) to (t\i);
}
\node at (-0.3,0) [draw=none, fill=none, anchor=east] {$s^*=s_0$};
\node at (-0.3,-1) [draw=none, fill=none, anchor=east] {$s_1$};
\node at (-0.3,-2) [draw=none, fill=none, anchor=east] {$s_2$};
\node at (-0.3,-6) [draw=none, fill=none, anchor=east] {$s_{n-1}$};
\node at (2.3,0) [draw=none, fill=none, anchor=west] {$t_0=\minmat(s^*)$};
\node at (2.3,-1) [draw=none, fill=none, anchor=west] {$t_1$};
\node at (2.3,-2) [draw=none, fill=none, anchor=west] {$t_2$};
\node at (2.3,-6) [draw=none, fill=none, anchor=west] {$t_{n-1}=\maxmat(s^*)$};
\draw[thick, red] (s2) to (t4);
\node at (2.1,-4) [draw=none, fill=none, anchor=west, align=center,text width=2cm, font=\footnotesize] {extend index\\ of $s_2$};
\draw[->,thick] (10:0.5) arc[start angle=10,end angle=-100,radius=0.5];
\draw[thick, red,<-] (2,-4) ++(167:0.35) arc[start angle=167,end angle=90,radius=0.35];
\node at (0.5,-5.83)  [draw=none, fill=none, font=\scriptsize, color=gray] {min};
\end{tikzpicture}
\hspace{17mm}
\begin{tikzpicture}[scale=0.8, every node/.style={font=\small, circle, draw, fill, inner sep=0pt, minimum size=5pt}]
\foreach \i in {0,...,6}{
  \node (s\i) at (0,-\i) {};
\node (t\i) at (2,-\i) {};
\draw (t0) to (s\i);
\draw[dashed, gray] (s\i) to (t\i);
}
\node at (-0.3,0) [draw=none, fill=none, anchor=east] {$\maxmat(t^*)=s_0$};
\node at (-0.3,-1) [draw=none, fill=none, anchor=east] {$s_1$};
\node at (-0.3,-2) [draw=none, fill=none, anchor=east] {$s_2$};
\node at (-0.3,-6) [draw=none, fill=none, anchor=east] {$\minmat(t^*)=s_{n-1}$};
\node at (2.3,0) [draw=none, fill=none, anchor=west] {$t_0=t^*$};
\node at (2.3,-1) [draw=none, fill=none, anchor=west] {$t_1$};
\node at (2.3,-2) [draw=none, fill=none, anchor=west] {$t_2$};
\node at (2.3,-6) [draw=none, fill=none, anchor=west] {$t_{n-1}$};
\draw[thick, red] (t2) to (s5);
\node at (-0.1,-5) [draw=none, fill=none, anchor=east, align=center,text width=2cm, font=\footnotesize] {extend index\\ of $t_2$};
\draw[->,thick] (2,0) ++(-80:0.5) arc[start angle=-80,end angle=-200,radius=0.5];
\draw[thick, red,->] (0,-5) ++(40:0.35) arc[start angle=40,end angle=100,radius=0.35];
\node at (1.5,-5.85)  [draw=none, fill=none, font=\scriptsize, color=gray] {max};
\end{tikzpicture}
\caption{The left subfigure illustrates the labeling induced by $s^*$ (\cref{def:induced_labeling}~\ref{def:labeling_source}). The dashed edges denote the minimum label matching $\minedges$. The extend index of $s_2$ (\cref{def:extended_star}~\ref{def:extend_index_source}) is the index of the bottommost target satisfying the ordering indicated by the red arrow.\\
The right subfigure illustrates the labeling induced by $t^*$ (\cref{def:induced_labeling}~\ref{def:labeling_target}). The dashed edges denote the maximum label matching $\maxedges$. The extend index of $t_2$ (\cref{def:extended_star}~\ref{def:extend_index_target}) is the index of the bottommost source satisfying the ordering indicated by the red arrow.}
\label{fig:extended_stars}
\end{figure}

\begin{definition}
\label{def:extended_star}
Let $s^*\in S$ and $t^*\in T$. For $x\in\{s^*,t^*\}$, the \emph{extended star from $x$}, denoted $\ext(x)$, consists of $\minedges$, $\maxedges$, all edges incident to $x$, and the following additional edges.
\begin{enumerate}[label=(\alph*)]
\item 
\label{def:extend_index_source}
If $x=s^*$, consider the labeling induced by $s^*$ and, for each $s\in S\setminus\{s^*\}$, define its \emph{extend index} (cf.~\cref{fig:extended_stars}, left) as
\[
    \extind (s) \;:=\; \max\bigl\{\,l : \pos_{t_l}(s) > \pos_{t_l}(s^*)\,\bigr\}
\]
whenever this set is nonempty. If $\extind(s)$ is defined, add the edge $\{s,\,t_{\extind(s)}\}$ to~$\ext(s^*)$.
\item 
\label{def:extend_index_target}
If $x=t^*$, consider the labeling induced by $t^*$ and, for each $t\in T\setminus\{t^*\}$, define its \emph{extend index}~(cf.~\cref{fig:extended_stars}, right) as
\[
    \extind (t) \;:=\; \max\bigl\{\,l : \pos_{s_l}(t) < \pos_{s_l}(t^*)\,\bigr\}
\]
whenever this set is nonempty. If $\extind(t)$ is defined, add the edge $\{s_{\extind(t)},t\}$ to~$\ext(t^*)$.
\end{enumerate}
\end{definition}

The following lemma identifies a large class of source-target pairs covered by extended stars, and conveys the main intuition behind their definition.

\begin{lemma}
\label{obs:extend_index_cover}
\leavevmode
\begin{enumerate}[topsep=2pt, label=(\alph*), beginpenalty=1000]
\item Let $s^*\in S$, and consider the labeling induced by $s^*$. Let $t\in T$ and assume that the extend index of $\maxmat(t)$ is defined and equals $i$. Then $\ext(s^*)$ covers $(s_j,t)$ for all $j\le i$.
\label{obs:extend_index_cover_source}
\item Let $t^*\in T$, and consider the labeling induced by $t^*$. Let $s\in S$ and assume that the extend index of $\minmat(s)$ is defined and equals $i$. Then $\ext(t^*)$ covers $(s,t_j)$ for all $j\le i$.
\label{obs:extend_index_cover_target}
\end{enumerate}
\end{lemma}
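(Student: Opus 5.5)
We prove part (a) by exhibiting an explicit temporal walk inside $\ext(s^*)$; part (b) is symmetric, with sources and targets swapped and all inequalities on positions reversed.

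Fix $t\in T$ and set $s:=\maxmat(t)$. The edge $\{s,t\}$ lies in $\maxedges\subseteq\ext(s^*)$. Since the bi-clique is extremally matched, $\maxmat(s)=t$, so $\{s,t\}$ is the largest edge at $s$. Let $i=\extind(s)$. By \cref{def:extended_star}, the edge $\{s,t_i\}$ belongs to $\ext(s^*)$, and $\pos_{t_i}(s)>\pos_{t_i}(s^*)$. Given $j\le i$, we use the walk
\[
W=(s_j,\,t_j,\,s^*,\,t_i,\,s,\,t).
\]
All its edges lie in $\ext(s^*)$. The edge $\{s_j,t_j\}$ is in $\minedges$. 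The edges $\{t_j,s^*\}$ and $\{s^*,t_i\}$ are star edges. The edge $\{t_i,s\}$ is the added extend edge, and $\{s,t\}$ is in $\maxedges$.

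We now check temporality vertex by vertex.
\begin{itemize}
\item At $t_j$: $s_j=\minmat(t_j)$, so $\pos_{t_j}(s_j)=0\le\pos_{t_j}(s^*)$.
\item At $s^*$: the $s^*$-ordered labeling gives $\pos_{s^*}(t_j)\le\pos_{s^*}(t_i)$, because $j\le i$.
\item At $t_i$: $\pos_{t_i}(s^*)<\pos_{t_i}(s)$, by the definition of the extend index.
\item At $s$: $\pos_s(t_i)\le\pos_s(t)$, since $t=\maxmat(s)$.
\end{itemize}
Collapsing repeated consecutive vertices does not break these inequalities, since the relevant positions stay comparable. This covers the case $j=i$, where the hop through $s^*$ is trivial but still temporal, and the case $s_j=s^*$, i.e.\ $j=0$, where $s_0=s^*$. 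So $W$ is a temporal walk and $\ext(s^*)$ covers $(s_j,t)$.

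The only delicate point is degenerate coincidences, for instance $s=s_j$ or $t=t_i$. If $t_i=t$, the walk collapses to $(\dots,t_i,s,t_i)$, which is a backtrack rather than a collapse, so instead we truncate $W$ at the first visit to $t$. Every prefix of a temporal walk is temporal, so this handles such repeats.

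For (b), we use the walk $(s,\,t,\,s_i,\,t^*,\,s_j,\,t_j)$ with $t=\minmat(s)$ and $i=\extind(t)$, again truncating at the first visit to $t_j$ if needed. The checks are mirror images:
\begin{itemize}
\item At $t$: $\pos_t(s)=0$.
\item At $s_i$: $\pos_{s_i}(t)<\pos_{s_i}(t^*)$, by the definition of the extend index.
\item At $t^*$: $\pos_{t^*}(s_i)\le\pos_{t^*}(s_j)$, because $j\le i$ and the $t^*$-ordered labeling is decreasing.
\item At $s_j$: $t_j=\maxmat(s_j)$.
\end{itemize}
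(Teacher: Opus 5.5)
Your proof is correct and uses exactly the paper's walks, $(s_j,t_j,s^*,t_i,\maxmat(t),t)$ for (a) and $(s,\minmat(s),s_i,t^*,s_j,t_j)$ for (b), with the same justifications. Your vertex-by-vertex checks spell out what the paper leaves implicit, and the truncation for the backtrack case $t_i=t$ is harmless but unnecessary: the paper's definition of a temporal walk uses non-strict inequalities, so a backtrack such as $(t_i,s,t_i)$ is already allowed.
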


\begin{proof}
For part \ref{obs:extend_index_cover_source}, observe that $\maxmat(t)$ having extend index $i$ implies, by definition, that the path $(s^*,t_i, \maxmat(t))$ is temporal. 
Given $j\leq i$, we obtain that the walk 
\[
(s_j, t_j, s^*, t_i, \maxmat(t),t)
\] 
is temporal (cf.~\cref{fig:extended_stars}, left), where we used that $\minmat(t_j)=s_j$, and that the bi-clique is extremally matched, so the edge~$\{\maxmat(t),t\}$ is maximal at $\maxmat(t)$.
 Moreover, the walk is contained in~$\ext(s^*)$, so that~$\ext(s^*)$ covers~$(s_j, t)$.

Part \ref{obs:extend_index_cover_target} follows analogously from the fact that the walk 
\[
(s, \minmat(s),s_i,t^*,s_j, t_j)
\]
is temporal for every $j\leq i$ (cf.~\cref{fig:extended_stars}, right).
\end{proof}

Conversely, if an extended star does \emph{not} cover a pair, we can deduce the following about the ordering of certain edges.

\begin{lemma}
\label{lem:no_cover_consequence}
Let $s,s^*\in S$ and $t, t^* \in T$.
\begin{enumerate}[label=(\alph*)]
\item If $\ext(s^*)$ does \emph{not} cover $(s,t)$, then $\pos_{\minmat(s)}(\maxmat(t))\leq \pos_{\minmat(s)}(s^*)$.
\label{lem:no_cover_consequence_source}
\item If $\ext(t^*)$ does \emph{not} cover $(s,t)$, then $\pos_{\maxmat(t)}(t^*)\leq \pos_{\maxmat(t)}(\minmat(s))$.
\label{lem:no_cover_consequence_target}
\end{enumerate}
\end{lemma}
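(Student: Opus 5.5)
We prove part (a) by contraposition and obtain part (b) from the symmetric argument. So assume $\pos_{\minmat(s)}(\maxmat(t)) > \pos_{\minmat(s)}(s^*)$; we show that $\ext(s^*)$ covers $(s,t)$.

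Work in the $s^*$-ordered labeling and let $j$ be the index with $t_j=\minmat(s)$. Because the bi-clique is extremally matched, $\minmat(t_j)=\minmat(\minmat(s))=s$, so $s=s_j$. Set $s':=\maxmat(t)$. The assumption now reads $\pos_{t_j}(s')>\pos_{t_j}(s^*)$. Split into two cases according to whether $s'=s^*$.

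\emph{Case $s'\neq s^*$.} The index $j$ belongs to the set $\{l:\pos_{t_l}(s')>\pos_{t_l}(s^*)\}$, so $\extind(s')$ is defined and $\extind(s')=:i\ge j$. Since $s'=\maxmat(t)$, \cref{obs:extend_index_cover}\ref{obs:extend_index_cover_source} applies to $t$: $\ext(s^*)$ covers $(s_l,t)$ for all $l\le i$, and in particular $(s_j,t)=(s,t)$.

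\emph{Case $s'=s^*$.} Extremal matching gives $t=\maxmat(s^*)=t_{n-1}$. If $s=s^*$, the edge $\{s^*,t\}$ itself lies in the star and covers the pair. Otherwise, consider the walk $(s,t_j,s^*,t)$.
\begin{itemize}
\item At $t_j$ the first edge $\{s,t_j\}$ has position $0$, since $s=\minmat(t_j)$, so the walk can continue to $s^*$.
\item At $s^*$ the last edge $\{s^*,t\}$ is the maximum edge of $s^*$, so it comes after $\{s^*,t_j\}$.
\end{itemize}
All three edges lie in $\minedges$ or are incident to $s^*$, so they belong to $\ext(s^*)$, and the walk is temporal.

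Part (b) is handled symmetrically. In the $t^*$-ordered labeling, write $s_j:=\maxmat(t)$, so that $t_j=\maxmat(s_j)=t$. Set $t':=\minmat(s)$; the negated hypothesis becomes $\pos_{s_j}(t')<\pos_{s_j}(t^*)$.
\begin{itemize}
\item If $t'\neq t^*$, then $\extind(t')\ge j$, and \cref{obs:extend_index_cover}\ref{obs:extend_index_cover_target} gives the cover of $(s,t_j)=(s,t)$.
\item If $t'=t^*$, then $s=\minmat(t^*)$. Either $t=t^*$, in which case the direct edge $\{s,t^*\}$ covers the pair, or the walk $(s,t^*,s_j,t)$ is temporal: $\{s,t^*\}$ is minimal at $t^*$, and $\{s_j,t\}$ is maximal at $s_j$.
\end{itemize}

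The only delicate point is the degenerate case in which $\maxmat(t)$ (respectively $\minmat(s)$) equals the star center. There the extend index is undefined, so \cref{obs:extend_index_cover} cannot be applied and a direct walk is needed instead.
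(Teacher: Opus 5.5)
Your proof is correct and follows the paper's argument. The key step is the same: $j$ lies in the set defining $\extind(\maxmat(t))$ (resp.\ $\extind(\minmat(s))$), so \cref{obs:extend_index_cover} applies. The paper phrases this by contradiction rather than by contraposition. One small correction: the degenerate case $\maxmat(t)=s^*$ (resp.\ $\minmat(s)=t^*$) cannot occur under your contrapositive hypothesis, since there it reads $\pos_{t_j}(s^*)>\pos_{t_j}(s^*)$. Your direct walk is valid but not needed; the paper simply notes that the inequality then holds with equality.
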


\begin{proof}
For part \ref{lem:no_cover_consequence_source}, consider the labeling induced by $s^*$ and let $j$ be chosen such that $s=s_j$.
If~$\maxmat(t)=s^*$, the desired inequality holds with equality, so we can assume $\maxmat(t)\neq s^*$. (One can even observe that $\maxmat(t)=s^*$ is a contradiction to $(s,t)$ not being covered.)
If~$\maxmat(t)$ has an extend index that is greater or equal to $j$,  then by \cref{obs:extend_index_cover}~\ref{obs:extend_index_cover_source}, $\ext(s^*)$ covers the pair~$(s_j,t)=(s,t)$, which is a contradiction.
Therefore,~$\maxmat(t)$ has either no extend index or it is strictly smaller than $j$. 
By \cref{def:extended_star}~\ref{def:extend_index_source} (with $\maxmat(t)\neq s^*$), this means that
\begin{equation*}
\pos_{t_l}(\maxmat(t))\leq \pos_{t_l}(s^*) \text{ for every  $l\geq j$}.
\end{equation*}
In particular, this holds for $l=j$, i.e., when $t_l=t_j=\minmat(s_j)=\minmat(s)$.

For part \ref{lem:no_cover_consequence_target}, consider the labeling induced by $t^*$ and let $j$ be such that $t=t_j$. 
Analogously to part~\ref{lem:no_cover_consequence_source}, the case $\minmat(s)=t^*$ is trivial, so we can assume $\minmat(s)\neq t^*$.
By \cref{obs:extend_index_cover}~\ref{obs:extend_index_cover_target}, if $\minmat(s)$ has an extend index greater or equal to $j$, then  $\ext(t^*)$ covers $(s,t_j)=(s,t)$. Therefore,~$\minmat(s)$ has either no extend index or it is smaller than $j$. 
Analogously to the proof of part~\ref{lem:no_cover_consequence_source}, this implies that $\pos_{s_j}(\minmat(s)) \geq \pos_{s_j}(t^*)$. This completes the proof as $s_j=\maxmat(t)$.
\end{proof}

Putting the two parts of this lemma together, we immediately obtain the following (cf.~\cref{fig:no_cover}).

\begin{figure}
\centering
\begin{tikzpicture}[scale=1, every node/.style={font=\small, circle, draw, fill, inner sep=0pt, minimum size=5pt}]
\node (sstar) at (0,-1) {};
\node (t) at (2, -2) {};
\node (tmax) at (0,-2.7) {};
\node (s) at (0,-4) {};
\node (smin) at (2,-4) {};
\node (tstar) at (2,-5) {};
\draw (tstar) to (tmax) to (smin) to (sstar);
\draw (sstar) to (t);
\draw (s) to (tstar);
\draw[->,thick] (0,-1)++(-10:0.5) arc[start angle=-10,end angle=-80,radius=0.5];
\draw[->,thick] (2,-5)++(110:0.5) arc[start angle=110,end angle=180,radius=0.5];
\draw[<-,thick, red] (0,-2.7)++(-1:0.5) arc[start angle=-1,end angle=-70,radius=0.5];
\draw[->,thick, blue] (2,-4)++(170:0.5) arc[start angle=170,end angle=90,radius=0.5];
\draw[dashed] (s) to node[above, draw=none, fill=none, font=\scriptsize, pos=0.2, gray, anchor=south, yshift=-3pt] {min}(smin);
\draw[dashed] (t) to node[below, draw=none, fill=none, font=\scriptsize, pos=0.2, gray, anchor=north, yshift=3pt, sloped] {max} (tmax);
\node at (-0.2,-1) [draw=none, fill=none, anchor=east] {$s^*$};
\node at (-0.2,-2.7) [draw=none, fill=none, anchor=east] {$\maxmat(t)$};
\node at (-0.2,-4) [draw=none, fill=none, anchor=east] {$s$};
\node at (2.2,-4) [draw=none, fill=none, anchor=west] {$\minmat(s)$};
\node at (2.2,-5) [draw=none, fill=none, anchor=west] {$t^*$};
\node at (2.2,-2) [draw=none, fill=none, anchor=west] {$t$};
\node at (-0.1,-3.2) [draw=none,fill=none,anchor=east,red, font=\scriptsize] {by{\hypersetup{linkcolor=red} \cref{lem:no_cover_consequence}~\ref{lem:no_cover_consequence_target}}};
\node at (2.1,-3.5) [draw=none,fill=none,anchor=west,blue, font=\scriptsize] {by {\hypersetup{linkcolor=blue} \cref{lem:no_cover_consequence}~\ref{lem:no_cover_consequence_source}}};
\end{tikzpicture}
\caption{If $\ext(s^*)\cup \ext(t^*)$ does not cover $(s,t)$, then the edges are ordered as indicated by the arrows.}
\label{fig:no_cover}
\end{figure}

\begin{corollary}
\label{cor:twoside_uncovered}
Let $s,s^*\in S$ and $t, t^* \in T$. If $\ext(s^*)\cup \ext(t^*)$ does \emph{not} cover $(s,t)$, then the walk 
\[
(\minmat(t^*), t^*, \maxmat(t),\minmat(s),s^*)
\] 
is temporal.
\end{corollary}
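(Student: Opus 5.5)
The plan is to check, vertex by vertex, the three inner vertices of the walk $(\minmat(t^*), t^*, \maxmat(t),\minmat(s),s^*)$. By the definition of a temporal walk, we need $\pos_{t^*}(\minmat(t^*))\le \pos_{t^*}(\maxmat(t))$ at $t^*$, then $\pos_{\maxmat(t)}(t^*)\le \pos_{\maxmat(t)}(\minmat(s))$ at $\maxmat(t)$, and finally $\pos_{\minmat(s)}(\maxmat(t))\le \pos_{\minmat(s)}(s^*)$ at $\minmat(s)$. Since non-coverage by the union implies non-coverage by each of $\ext(s^*)$ and $\ext(t^*)$ separately, both parts of \cref{lem:no_cover_consequence} are available.

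The first inequality is trivial: $\pos_{t^*}(\minmat(t^*))=0$ by definition of $\minmat$, so it holds for any successor of $t^*$. If $\maxmat(t)=\minmat(t^*)$, this step collapses to a repeated vertex, and the collapsing convention for walks handles it.

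The second inequality is exactly \cref{lem:no_cover_consequence}~\ref{lem:no_cover_consequence_target} applied to $\ext(t^*)$. The third is exactly \cref{lem:no_cover_consequence}~\ref{lem:no_cover_consequence_source} applied to $\ext(s^*)$. Because both lemma statements use non-strict inequalities, they match the non-strict condition required for temporal walks.

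The only point needing care is degenerate coincidences. For instance, $\minmat(s)=t^*$ or $\maxmat(t)=s^*$ produces consecutive repeated vertices or reduces the walk to a shorter one. After collapsing maximal runs of identical vertices, the remaining conditions at each surviving inner vertex are still among the three inequalities above (or trivial), and the lemma's proof already covers these equality cases. I therefore expect no real obstacle: the corollary is a direct concatenation of the two halves of \cref{lem:no_cover_consequence} together with the minimality of $\minmat(t^*)$ at $t^*$.
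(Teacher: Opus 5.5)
Your proposal is correct and is the paper's argument: the corollary follows by combining parts (a) and (b) of \cref{lem:no_cover_consequence} with $\pos_{t^*}(\minmat(t^*))=0$. One small correction to your discussion of degenerate cases: the walk alternates between $S$ and $T$, so consecutive repetitions never occur and nothing collapses; a coincidence such as $\minmat(s)=t^*$ only makes one of the three required inequalities an equality, and those three non-strict inequalities hold unconditionally.
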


This corollary connects the coverage of extended stars to cut-crossings in simple stars.

\begin{lemma}
\label{lem:crossing_or_covering}
Let $s^*\in S$ and consider the $s^*$-ordered labeling. Let $k\in \{0,\dots, n-2\}$. Then, for every $i>k$, one of the following holds:
\begin{enumerate}[label=(\roman*)]
\item $s_i$ is 3-hop-$k$-cut-crossing.
\label{lem:cond_crossing}
\item $\ext(s^*)\cup \ext(t_i)$ covers $(s_j,T)$ for every $j\leq k$.
\label{lem:cond_covering}
\end{enumerate}
\end{lemma}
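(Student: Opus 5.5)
Assume \ref{lem:cond_covering} fails. Then there exist $j\le k$ and $t\in T$ such that $\ext(s^*)\cup\ext(t_i)$ does not cover $(s_j,t)$. The plan is to extract from this uncovered pair a temporal path of length at most $4$ that starts at $s_i$ and ends with an edge $\{t_{j'},s^*\}$ for some $j'\le k$. This is exactly \ref{lem:cond_crossing}.

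The main tool is \cref{cor:twoside_uncovered}, applied with $s:=s_j$ and $t^*:=t_i$. It shows that the walk
\[
(\minmat(t_i),\, t_i,\, \maxmat(t),\, \minmat(s_j),\, s^*)
\]
is temporal. In the $s^*$-ordered labeling we have $\minmat(t_i)=s_i$ by \cref{def:induced_labeling}~\ref{def:labeling_source}. Since the bi-clique is extremally matched, $\minedges$ is a perfect matching and $\minmat(s_j)=t_j$. The walk therefore reads
\[
(s_i, t_i, \maxmat(t), t_j, s^*).
\]
It starts at $s_i$ with $i>k$ and ends with the edge $\{t_j,s^*\}$ with $j\le k$. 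It has four edges, which matches ``length at most $m+1$'' for $m=3$.

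The remaining step is to turn this walk into a genuine temporal path, since \cref{def:cut_crossing} asks for a path. Collapsing repeated consecutive vertices and extracting a temporal path from a temporal walk, as noted in the preliminaries, only shortens it. The risk is that the extracted path no longer ends with an edge $\{t_{j'},s^*\}$ with $j'\le k$. I would handle this with a short case check on coincidences.

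\begin{itemize}
\item If $\maxmat(t)=s_i$, the walk reduces to $(s_i,t_j,s^*)$, which is still fine.
\item If $\maxmat(t)=s^*$, the walk visits $s^*$ already at its third vertex. The prefix $(s_i,t_i,s^*)$ then ends with $\{t_i,s^*\}$ and $i>k$, so it is not usable. However, $\maxmat(t)=s^*$ forces $(s_j,t)$ to be covered via $(s_j,t_j,s^*,t)$, since $\{s^*,\maxmat(s^*)\}$ is maximal at $s^*$. That contradicts the choice of the uncovered pair, so this case cannot occur.
\item The vertices $t_i$ and $t_j$ are distinct because $i\neq j$. The walk's only source vertices are $s_i$, $\maxmat(t)$ and $s^*$, and $s_i\neq s^*$ since $s^*=s_0$ and $i>0$.
\end{itemize}

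So after excluding these coincidences, the walk is a path ending in $\{t_j,s^*\}$, and $s_i$ is 3-hop-$k$-cut-crossing.

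I expect the only delicate step to be this degenerate-case bookkeeping, in particular ruling out $\maxmat(t)=s^*$ via the covering argument.
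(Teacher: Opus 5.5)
Your proof is correct and follows the paper's argument: you apply \cref{cor:twoside_uncovered} with $s:=s_j$ and $t^*:=t_i$ to obtain the temporal walk $(s_i,t_i,\maxmat(t),t_j,s^*)$, which witnesses that $s_i$ is 3-hop-$k$-cut-crossing. Your extra case check on coincident vertices is sound and makes explicit the walk-to-path step that the paper leaves implicit. In it, $\maxmat(t)=s^*$ is excluded because $\ext(s^*)$ would then cover $(s_j,t)$, and the walk is shortcut to $(s_i,t_j,s^*)$ when $\maxmat(t)=s_i$.
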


\begin{proof}
Assume that condition \ref{lem:cond_covering} is not fulfilled, i.e., there exists $j\leq k$ and $t\in T$ such that~$(s_j,t)$ is not covered by $\ext(s^*)\cup \ext(t_i)$.
Then, by \cref{cor:twoside_uncovered}, the walk $(s_i, t_i, \maxmat(t),t_j,s^*)$ is temporal.
This means that~$s_i$ is 3-hop-$k$-cut-crossing (cf.~\cref{fig:no_cover}, \cref{def:cut_crossing}).
\end{proof}

We summarize the findings of this section in the following result, which constitutes our main technical contribution.

\begin{lemma}
\label{lem:final}
Fix $s^*\in S$, consider the $s^*$-ordered labeling, and set $k:=\lfloor n/2 \rfloor$.
Let $S':=\{s_0, \dots, s_k\}$ and $T':=\{t_k, \dots, t_{n-1}\}$.
Then there exists a set $E^*$ of at most $6n$ edges containing $\minedges \cup \maxedges$ that covers either
\begin{enumerate}[label=(\roman*)]
  \item \label{lem:final_cond_source} $(S,T')$, or
  \item \label{lem:final_cond_target} $(S',T)$.
\end{enumerate}
\end{lemma}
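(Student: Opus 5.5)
We split into two cases according to how many sources beyond the cut are $3$-hop-$k$-cut-crossing for $s^*$, using \cref{lem:crossing_or_covering} as the dichotomy.

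\emph{Case 1: every $s_i$ with $i>k$ is $3$-hop-$k$-cut-crossing.} We aim for alternative~\ref{lem:final_cond_source}. By \cref{obs:cut_crossing} with $m=3$ and $S_\mathrm{cross}=\{s_{k+1},\dots,s_{n-1}\}$, there is a set $E'$ of at most $3(n-1-k)$ edges such that $E'\cup\star(s^*)$ covers $(S,T')$, since $S'\cup S_\mathrm{cross}=S$. Take $E^*:=E'\cup\star(s^*)\cup\minedges\cup\maxedges$. Here $\star(s^*)$ contributes at most $2n$ edges ($\minedges$ plus the star at $s^*$), $\maxedges$ adds at most $n$, and $E'$ adds at most $3(n-1-\lfloor n/2\rfloor)\le 3n/2$. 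This totals at most $4.5n\le 6n$. The one point to check is that each crossing path contributes only $3$ new edges, its final edge $\{t_j,s^*\}$ already lying in the star. This is exactly how \cref{obs:cut_crossing} counts.

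\emph{Case 2: some $i>k$ has $s_i$ not $3$-hop-$k$-cut-crossing.} By \cref{lem:crossing_or_covering}, $\ext(s^*)\cup\ext(t_i)$ covers $(s_j,T)$ for all $j\le k$, i.e.\ it covers $(S',T)$, which is alternative~\ref{lem:final_cond_target}. Take $E^*:=\ext(s^*)\cup\ext(t_i)$, which contains $\minedges\cup\maxedges$ by \cref{def:extended_star}. For the size bound, each extended star consists of $\minedges$, $\maxedges$, the $n$ edges at its centre, and at most one extend-index edge per non-centre vertex of the opposite type ($\le n-1$). Counting the shared matchings once, the union has at most $2n+2n+2(n-1)<6n$ edges.

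The main obstacle is only the bookkeeping in Case 2: the bound $6n$ relies on $\minedges$ and $\maxedges$ not being double counted, so the constant is met exactly. Everything substantive is already done by \cref{cor:twoside_uncovered} and \cref{lem:crossing_or_covering}.
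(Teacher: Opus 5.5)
Your proof is correct and follows the paper's proof essentially step for step. The paper makes the same case split on whether every $s_i$ with $i>k$ is $3$-hop-$k$-cut-crossing: in the first case it uses \cref{obs:cut_crossing} with $E^*=\star(s^*)\cup E'\cup\maxedges$, and in the second it uses \cref{lem:crossing_or_covering} with $E^*=\ext(s^*)\cup\ext(t_i)$ and the same six-group count. Your first-case bound ($|E'|\le 3(n-1-k)$, giving about $4.5n$) is slightly sharper than the paper's cruder $|E'|\le 3n$, but this does not affect the statement.
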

\begin{proof}
First, assume that every source $s_i$ with $i>k$ is $3$-hop-$k$-cut-crossing,
i.e., the set of 3-hop-$k$-cut-crossing sources is $S_\mathrm{cross}=\{s_{k+1},\ldots,s_{n-1}\}$.
By \cref{obs:cut_crossing}, there exists a set $E'$ of at most $3|S_\mathrm{cross}|\leq 3n$ edges such that $E'\cup \star(s^*)$ covers $(S'\cup S_\mathrm{cross},T')=(S,T')$.
Recall that $\star(s^*)$ contains $\minedges$ and $|\star(s^*)|=2n-1$.
So we can set $E^*:=\star(s^*)\cup E' \cup \maxedges$ to obtain a set of at most $6n$ edges fulfilling condition~\ref{lem:final_cond_source}.

Now assume there exists $i>k$ such that $s_i$ is not $3$-hop-$k$-cut-crossing.
Then by \cref{lem:crossing_or_covering}, the set~$E^*:=\ext(s^*)\cup \ext(t_i)$ covers $(S',T)$.
Recall from \cref{def:extended_star} that $\ext(s^*)\cup \ext(t_i)$ consists of $\minedges$, $\maxedges$, all edges incident to $s^*$, all edges incident to $t_i$,
at most one additional edge per source (from $\ext(s^*)$), and at most one additional edge per target (from $\ext(t_i)$).
Each of these six groups contains at most $n$ edges, so $|E^*|\leq 6n$,
and condition~\ref{lem:final_cond_target} holds.
\end{proof}

We now have all the ingredients to prove \cref{thm:mainthm_extremally_matched}.
We restate it here in a slightly stronger form that will allow us to improve the bound on the constant factors for cliques in \cref{sec:improved_factor}.

\begin{theorem}
\label{thm:mainthm_bicliques_stronger}
Every extremally matched temporal bi-clique $(S,T,\lambda)$ with $n:=|S|=|T|$ contains a temporal spanner of size $14n$.
Moreover, this spanner also covers $(S\cup T, S\cup T)$.
\end{theorem}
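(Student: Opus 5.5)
We would prove the statement by strong induction on $n$. The idea is to combine \cref{lem:final} with \cref{thm:mainthm_biclique} applied to a strictly smaller instance. The ``moreover'' part follows separately from the fact that the constructed spanner contains $\minedges\cup\maxedges$.

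For the induction hypothesis, assume the statement for all extremally matched bi-cliques of size less than $n$. By the proof in \cref{sec:preliminaries} (which only uses \cref{lem:dismountable}), this gives \cref{thm:mainthm_biclique} for every bi-clique whose extremally matched core $G[S',T']$ has size less than $n$. In particular it applies to any bi-clique with $\min(|S|,|T|)<n$, since that core has size at most $\min(|S|,|T|)$. The base case $n=1$ is a single edge, which lies in both $\minedges$ and $\maxedges$.

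For the inductive step, fix any $s^*$, take $k=\lfloor n/2\rfloor$, and apply \cref{lem:final}. This gives $E^*$ with $|E^*|\le 6n$ and $E^*\supseteq\minedges\cup\maxedges$. There are two cases.
\begin{itemize}
\item In case (i), $E^*$ covers $(S,T')$. It remains to cover $(S,T\setminus T')$, where $|T\setminus T'|=k<n$. We apply \cref{thm:mainthm_biclique} inductively to the induced bi-clique $G[S,T\setminus T']$. Temporal paths in an induced subgraph are temporal in $G$, so this adds at most $10k+2(n+k)$ edges. The total is $6n+2n+12k\le 14n$.
\item In case (ii), $E^*$ covers $(S',T)$. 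It remains to cover $(S\setminus S',T)$, where $|S\setminus S'|=n-k-1\le n/2$, for both parities of $n$. This similarly costs at most $12(n-k-1)+2n$ edges, for a total of at most $8n+12(n-k-1)\le 14n$.
\end{itemize}
Every pair in $S\times T$ is covered in both cases, so the union is a spanner of size at most $14n$.

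For the ``moreover'' part, we use only that the spanner contains $\minedges\cup\maxedges$ and covers $S\times T$.
\begin{itemize}
\item \emph{Source $s$ to source $s'$.} Let $t'=\maxmat(s')$, and take a temporal $s$-$t'$-path. Since $\{t',s'\}$ is the largest edge at $t'$ (extremally matched), appending it keeps the walk temporal.
\item \emph{Target $t$ to a vertex $v$.} First step along $\{t,\minmat(t)\}$. This edge is the smallest at $\minmat(t)$, so any temporal walk starting from $\minmat(t)$ can follow it. This reduces the case to the source-to-anything case just handled.
\end{itemize}
Trivial cases such as $v$ equal to the starting vertex are immediate, since walks may repeat vertices.

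The main point to check is the arithmetic of the recursion, in particular that applying the general bi-clique bound $10\min+2(|S|+|T|)$ to the unbalanced remainder still sums to exactly $14n$. This works because $k=\lfloor n/2\rfloor$ makes $12\cdot(\text{remainder})\le 6n$. We also need to check that invoking \cref{thm:mainthm_biclique} inside the induction is legitimate. It is, because its derivation from the extremally matched case only ever uses extremally matched instances of size at most the minimum side, which here is less than $n$.
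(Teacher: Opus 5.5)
Your proof is correct and is essentially the paper's argument. Like the paper, you apply \cref{lem:final} with $k=\lfloor n/2\rfloor$, reduce the uncovered remainder via dismountability to an extremally matched core of size at most $\lfloor n/2\rfloor$, and derive the ``moreover'' part from $\minedges\cup\maxedges$. The only difference is packaging: the paper writes out the recurrence $f(n)\le 9n-4x_n+f(x_n)$ explicitly, whereas you invoke the bound of \cref{thm:mainthm_biclique} on the remainder through the induction hypothesis, which gives the same arithmetic $8n+12\cdot(\text{smaller side of remainder})\le 14n$.
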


\begin{proof}
Let $f(n)$ denote the maximum size of a lightest spanner of an extremally matched temporal bi-clique of size $n$, that is,
\begin{equation*}
f(n):=\max \{\min\{|E'|: E' \text{ is a spanner of } G\}: G=(S,T,\lambda) \text{ is extremally matched with } |S|=n\}.
\end{equation*}
Let $G_n=(S,T,\lambda)$ be an extremally matched temporal bi-clique realizing this maximum.
Fix~${s^*\in S}$, consider the $s^*$-ordered labeling, and let $S',T'$ be as in \cref{lem:final}.
By \cref{lem:final}, there exists a set $E^*$ of at most $6n$ edges covering either $(S,T')$ or $(S',T)$.
In the first case, it remains to find a spanner of $(S,T\setminus T')$, and in the second a spanner of $(S\setminus S',T)$.
In both cases, call this remaining subgraph $G'$.

Note that $|S \setminus S'| = |\{s_{\lfloor n/2 \rfloor +1}, \dots, s_{n-1}\}| \leq \lfloor n/2 \rfloor$, and similarly $|T \setminus T'| = |\{t_0, \ldots, t_{\lfloor n/2 \rfloor -1}\}| = \lfloor n/2 \rfloor$.
Hence, in both cases, $G'$ is a bi-clique with at most $\lfloor 3n/2 \rfloor$ vertices in total, with the smaller side having size at most $\lfloor n/2 \rfloor$ and the larger side having size $n$.
Applying dismountability, i.e., \cref{lem:dismountable} to $G'$, we obtain $S''\subset S$ and $T''\subset T$ such that $G[S'',T'']$ is extremally matched with~$|S''|=|T''|\leq \lfloor n/2 \rfloor$, and a set of edges $E''$ such that adding $E''$ to any spanner of $G[S'',T'']$ yields a spanner of $G'$, where
\begin{equation*}
|E''|\leq 2\cdot (\lfloor 3n/2 \rfloor - |S''|-|T''|)
\overset{|S''|=|T''|}{\leq}3n - 4|S''|.
\end{equation*}
To summarize, we have selected the edges $E^*\cup E''$ and it only remains to find a spanner of $G[S'',T'']$ to obtain a spanner of $G_n$.
Setting $x_n := |S''| \leq \lfloor n/2 \rfloor$, we have
\begin{equation*}
f(n)\leq |E^*|+|E''|+f(|S''|)
\leq
6n + 3n - 4x_n + f(x_n)
=9n- 4x_n + f(x_n).
\end{equation*}
We now prove by strong induction that $f(n)\leq 14n$. The base case $f(1)=1$ is immediate (and we set $f(0):=0$ for the case where $x_n=0$, i.e., the recursion ends). For the induction step, we have
\begin{equation*}
f(n)\leq 9n-4x_n+f(x_n)
\overset{\text{ind.~hyp.}}{\leq} 9n + (14-4)x_n 
\overset{x_n\leq n/2}{\leq}
 9n + 10 \cdot \frac{n}{2} \leq 14n.
\end{equation*}
Hence spanners of the claimed size exist.
Last, the final edge set contains $\minedges \cup \maxedges$ as it is contained in the edge set added in the first recursion level chosen according to \cref{lem:final}. This ensures that the spanner does in fact preserve all temporal reachabilities: any target~$t$ reaches the same set of vertices as $\minmat(t)$, since prefixing any temporal walk beginning in $\minmat(t)$ with the edge $\{t, \minmat(t)\}$ yields a valid temporal walk (where we have used that $G$ is extremally matched). Symmetrically, any source~$s$ is reachable from the same set of vertices that reach $\maxmat(s)$, since appending $\{s, \maxmat(s)\}$ to any temporal walk ending at $\maxmat(s)$ again yields a temporal walk.
\end{proof}

Recall from \cref{sec:preliminaries} that \cref{thm:mainthm_extremally_matched} implies~\cref{thm:mainthm_biclique} and that every temporal clique contains a spanner of size $14n$.

\section{A simple improvement on the constant factor for cliques}
\label{sec:improved_factor}

While the focus of our work lies in proving the optimal asymptotics of spanners of temporal cliques, we illustrate here briefly how a result from the literature can be invoked to halve the constant factor for cliques from $14n$ to $7n$.

In \cite{Carnevale25}, the concept of dismountability for cliques was refined as follows.
Given a temporal clique $G=(V,\lambda)$, we say that a vertex $v$ is $\{1,2\}$-hop dismountable if there exist vertices $u,w$ such that $(v,\minmat(u),u)$ is a temporal walk and $(w,\maxmat(w),v)$ is a temporal walk.
If this is the case, one can remove $v$ from the graph, find a spanner of the remaining graph and then reintroduce $v$ together with the at most 4 edges of the two walks.
Analogously to the proof of \cref{lem:dismountable}, we obtain the following by repeatedly deleting dismountable vertices.

\begin{observation}
\label{lem:12dismountable}
Let $G=(V,\lambda)$ be a temporal clique. Then there exists $V'\subset V$ and a set $E'$ of at most~$4(|V|-|V'|)$ edges such that no vertex of $G[V']$ is $\{1,2\}$-hop dismountable and, if $E^*$ is a temporal spanner of $G[V']$, then $E^*\cup E'$ is a temporal spanner of $G$.
\end{observation}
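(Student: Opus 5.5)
We will prove the observation by the same iterative deletion argument as in the proof of \cref{lem:dismountable}. Initialize $V':=V$ and $E':=\emptyset$. As long as some vertex $v\in V'$ is $\{1,2\}$-hop dismountable in the current clique $G[V']$, we fix witnesses $u,w\in V'\setminus\{v\}$. These satisfy that $(v,\minmat(u),u)$ and $(w,\maxmat(w),v)$ are temporal walks, where $\minmat$ and $\maxmat$ are taken in $G[V']$. We add the edges of these two walks to $E'$; they are $\{v,\minmat(u)\}$, $\{\minmat(u),u\}$, $\{w,\maxmat(w)\}$ and $\{\maxmat(w),v\}$, so at most $4$ edges. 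We then remove $v$ from $V'$. Each iteration removes one vertex, so the procedure terminates. On termination no vertex of $G[V']$ is dismountable, and $|E'|\le 4(|V|-|V'|)$.

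For correctness, we argue by reverse induction over the deletion sequence. Let $V_0\supsetneq V_1\supsetneq\cdots\supsetneq V_r=V'$ be the successive vertex sets, where $V_{i+1}=V_i\setminus\{v_i\}$. Let $E'_i$ denote the edges added from step $i$ onward. We claim that if $E^*$ spans $G[V_{i+1}]$, then $E^*\cup E'_i$ spans $G[V_i]$. Reachability between two vertices of $V_{i+1}$ is already witnessed inside $E^*$. It remains to handle pairs involving $v:=v_i$.

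\emph{From $v$ to $x\in V_{i+1}$.} We take the temporal walk $(v,\minmat(u),u)$ and concatenate it with a temporal $u$-$x$-path in $E^*$. The key point is that $\minmat(u)$ is computed in $G[V_i]$, so $\{u,\minmat(u)\}$ is the earliest edge at $u$ among the edges of $G[V_i]$. Hence any temporal path leaving $u$ within $G[V_{i+1}]$ uses an edge at $u$ of label at least $\lambda(\{\minmat(u),u\})$, and the concatenation is a temporal walk.

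\emph{From $x\in V_{i+1}$ to $v$.} Symmetrically, a temporal $x$-$w$-path in $E^*$ ends with an edge at $w$ of label at most $\lambda(\{w,\maxmat(w)\})$, so appending $(w,\maxmat(w),v)$ gives a temporal walk.

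There is one degenerate case to treat. The witnesses $\minmat(u)$ or $\maxmat(w)$ may coincide with $v$ itself, or $u=x$ may hold. In these cases the walk simply collapses, and walk semantics handle it as in the preliminaries. If $x=u$, the first walk alone suffices. The mild subtlety is that $\minmat(u)$ might equal $v$. Then the "walk" $(v,v,u)$ is just the edge $\{v,u\}$, which is minimal at $u$, and the argument is unchanged.

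Chaining the claim over all $i$ gives the statement. The only point needing care is that $\minmat$ and $\maxmat$ are evaluated in the current subgraph $G[V_i]$ rather than in $G$. This guarantees minimality and maximality with respect to all edges that later spanner paths might use.
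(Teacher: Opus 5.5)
Your proposal is correct and takes the same approach as the paper. The paper only says the observation follows by repeatedly deleting $\{1,2\}$-hop dismountable vertices, analogously to \cref{lem:dismountable}. Your reverse induction supplies the details it leaves implicit: $\minmat$ and $\maxmat$ are evaluated in the current subgraph $G[V_i]$, and the witnesses $u,w$ are taken distinct from $v$.
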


We make use of the following known result for which we do not provide a proof here.

\begin{theorem}[Carnevale, Casteigts, Corsini~{\cite[Corollary 4.1]{Carnevale25}}]
\label{thm:dismountablecliques}
Let $G=(V,\lambda)$ be a temporal clique that has no $\{1,2\}$-hop dismountable vertices.
Let $V^-:=\{\minmat(v): v \in V\}$ and $V^+:=\{\maxmat(v): v \in V\}$.
Then $V^-$ and $V^+$ have the same size and form a partition of $V$, and the temporal bi-clique $G':=(V^-,V^+, \lambda)$ is extremally matched.
\end{theorem}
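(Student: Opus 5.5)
The plan is to translate ``no $\{1,2\}$-hop dismountable vertices'' into two local conditions and then play them against each other. For a vertex $v$, write $A(v)$ if there is $u\neq v$ such that $(v,\minmat(u),u)$ is a temporal walk, and $B(v)$ if there is $w\neq v$ such that $(w,\maxmat(w),v)$ is a temporal walk. I exclude $u=v$ and $w=v$ because otherwise the condition is vacuous: $(v,\minmat(v),v)$ is always temporal. The hypothesis then says that no vertex satisfies both $A(v)$ and $B(v)$.

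\textbf{Step 1: disjointness of $V^-$ and $V^+$.} Every $x\in V^-$ satisfies $A(x)$. Indeed, if $x=\minmat(u)$, the walk $(x,x,u)$ collapses to the single edge $\{x,u\}$. Symmetrically, every $y\in V^+$ satisfies $B(y)$. Hence $V^-\cap V^+=\emptyset$. More generally, for each $u$ with $x:=\minmat(u)$, every $v$ with $\pos_x(v)<\pos_x(u)$ satisfies $A(v)$. Likewise, for each $w$ with $y:=\maxmat(w)$, every $v$ with $\pos_y(v)>\pos_y(w)$ satisfies $B(v)$.

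\textbf{Step 2: the involutions $\minmat(\minmat(u))=u$ and $\maxmat(\maxmat(w))=w$.} Suppose $x=\minmat(u)$ but some $v\neq u$ comes before $u$ at $x$, for instance $v=\minmat(x)$. Then $v$ satisfies $A$, and also $v\in V^-$. I aim to derive $B(v)$ as well, which would be a contradiction. To do so, I would look at $y:=\maxmat(v)$ and at the edge $\{x,v\}$, which carries the globally earliest label at $x$. The idea is to produce some $w$ that precedes $v$ at $\maxmat(w)$, via an extremal choice of the offending pair (for example, minimise $\lambda(\{x,v\})$ over all violating configurations). If this works, then $\minmat$ restricted to $V^-$ is an involution-like injection onto its image, and the symmetric argument does the same for $\maxmat$ and $V^+$.

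\textbf{Step 3: covering and the bipartite structure.} To show $V^-\cup V^+=V$, take $z$ outside both sets. Applying Step 1's ``more generally'' clause at $x=\minmat(z)$ and at $y=\maxmat(z)$, together with the involutions from Step 2, should force $A(z)$ and $B(z)$ simultaneously. Once we have a partition with $\minmat$ mapping $V^+\to V^-$ bijectively and $\maxmat$ mapping $V^-\to V^+$ bijectively, we get $|V^-|=|V^+|$. We must also check that in the induced bi-clique $(V^-,V^+,\lambda)$ the min- and max-neighbours are the same as in the whole clique. For min-neighbours this holds because the minimum neighbour of each vertex lies on the opposite side. The max-neighbour case is symmetric. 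Then $\minedges$ and $\maxedges$ are perfect matchings, which is exactly extremal matchedness.

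The main obstacle is Step 2 (and its reuse in Step 3). The difficulty is that the hypothesis only forbids a vertex from being \emph{both} reached-early via some $\minmat$ \emph{and} left-late via some $\maxmat$. Turning a violation of $\minmat\circ\minmat=\mathrm{id}$ into the second condition requires choosing the witness $w$ cleverly. I would try an extremal argument on the globally smallest label incident to the violating configuration first, and fall back on a counting argument comparing $|V^-|$ and $|V^+|$ if that fails.
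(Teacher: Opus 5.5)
Your reformulation via $A(v)$ and $B(v)$ (including the exclusion of $u=v$ and $w=v$) and your Step 1 are correct. The paper cites this theorem without proof, so I am assessing your proposal on its own.

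\textbf{The gap.} Step 2 is not merely unfinished: its target is false. Take any $u\in V^+$. Then $\minmat(\minmat(u))\in V^-$, and your own Step 1 shows $V^-\cap V^+=\emptyset$, so $\minmat(\minmat(u))\neq u$. Likewise $\maxmat(\maxmat(w))\neq w$ for every $w\in V^-$. Equivalently, if $\minmat$ were an involution on $V$ it would be surjective, forcing $V^-=V$ and hence $V^+=\emptyset$. No extremal choice of witness can rescue this, and Step 3, which relies on these involutions, falls with it.

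Your closing check is also wrong. For $s\in V^-$, the clique min-neighbour $\minmat(s)$ lies in $V^-$ by the very definition of $V^-$. So the earliest neighbour of $s$ inside $V^+$ is a different vertex. Showing that it is the unique $t\in V^+$ with $\minmat(t)=s$ is a genuine part of the claim, and symmetrically for the latest neighbour of $t\in V^+$ inside $V^-$.

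\textbf{The missing idea.} What works is a counting argument over the fibres of $\minmat$ and $\maxmat$, not involutions.
\begin{enumerate}
\item Let $N:=\{v:\neg A(v)\}$ and $P:=\{v:\neg B(v)\}$. The hypothesis says $N\cup P=V$, and Step 1 gives $V^+\subseteq N$ and $V^-\subseteq P$.
\item Suppose distinct $u_1,u_2\in N$ had $\minmat(u_1)=\minmat(u_2)=:x$. Then $\neg A(u_1)$, applied with $u=u_2$, gives $\pos_x(u_1)>\pos_x(u_2)$, and $\neg A(u_2)$ gives the reverse. Hence $\minmat$ is injective on $N$, so $|N|\le|V^-|$.
\item Symmetrically, $\maxmat$ is injective on $P$, so $|P|\le|V^+|$.
\item Therefore $n\le|N|+|P|\le|V^-|+|V^+|\le n$. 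Equality throughout gives the partition $V=V^-\sqcup V^+$, $N=V^+$, $P=V^-$, $|V^-|=|V^+|$, and that $\minmat\colon V^+\to V^-$ and $\maxmat\colon V^-\to V^+$ are bijections.
\item For $s\in V^-$, let $t_s\in V^+$ be the vertex with $\minmat(t_s)=s$. For every other $v\in V^+$, $\neg A(v)$ applied with $u=t_s$ gives $\pos_s(v)>\pos_s(t_s)$. So $t_s$ is the earliest neighbour of $s$ in $V^+$.
\item For $t\in V^+$, $\minmat(t)$ already lies in $V^-$. The max side is symmetric, using $\neg B$ on $V^-$.
\end{enumerate}
Together these say exactly that $(V^-,V^+,\lambda)$ is extremally matched.
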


This allows us to halve the factor proven in \cref{obs:biclique_to_clique} and conclude our main result in its final form.
We begin by restating it.

\mainthm*

\begin{proof}
Let $G=(V,\lambda)$ be a temporal clique on $n$ vertices. 
Then by \cref{lem:12dismountable}, there exists~${V'\subseteq V}$ of size $n':=|V'|$, and a set $E'$ of size at most $4(n-n')$ such that $G[V']$ has no $\{1,2\}$-hop dismountable vertices, and unioning a spanner of $G[V']$ with $E'$ yields a spanner of~$G$.
Then by \cref{thm:dismountablecliques}, the sets~$V^-:=\{\minmat(v): v \in V'\}$ and $V^+:=\{\maxmat(v): v \in V'\}$ (where $\minmat$ and~$\maxmat$ are taken in~$G[V']$) have each size~$n'/2$, partition~$V'$, and the bi-clique $G':=(V^-, V^+,\lambda)$ is extremally matched.
By~\cref{thm:mainthm_bicliques_stronger},~$G'$ contains a spanner $E^*$ of size $14n'/2=7n'$ that also spans $G[V']$.
Together, we obtain a spanner of size
\begin{equation*}
7n' + 4(n-n') = 4n + 3 n' \leq 7n.\qedhere
\end{equation*}
\end{proof}

\section{Recursion in bi-cliques and shifted matching graphs}
\label{sec:shifted_matching_proofs}

In this subsection, we provide the missing proofs from \cref{sec:intuition}.
We begin by restating the lemma that allows recursion in bi-cliques when a sufficiently large subgraph is already covered. Its proof is analogous to the argument that \cref{lem:final} implies \cref{thm:mainthm_extremally_matched} at the end of the previous section and in fact generalizes it, where we are somewhat less careful about minimizing constant factors.

\recursionlemma*

\begin{proof}
Let $f(n)$ denote the maximum lightest-spanner size over all extremally matched
temporal bi-cliques of size~$n$, i.e.,
\begin{equation*}
  f(n) := \max\bigl\{\min\{|E'| : E' \text{ is a spanner of } G\} :
  G=(S,T,\lambda) \text{ is extremally matched with } |S|=n\bigr\},
\end{equation*}
and let $G=(S,T,\lambda)$~be an extremally matched bi-clique of size~$n$ realizing this
maximum. By assumption, there exist $S'$ and~$T'$ as described, so it remains
to cover $(S \setminus S', T)$ and $(S, T \setminus T')$.

Before proceeding recursively, we apply dismountability (\cref{lem:dismountable}).
Applied to $G[S \setminus S', T]$, \cref{lem:dismountable} yields subsets $S'' \subseteq S \setminus S'$ and $T'' \subseteq T$ such that $G[S'',T'']$ is extremally matched, together with an edge set~$E''$ of size safely upper bounded by $2(|S|+|T|)=4n$ such that any spanner of $G[S'',T'']$ together with~$E''$ forms a spanner of $G[S \setminus S', T]$. Since $S'' \subseteq S \setminus S'$, we have $|S''| \leq n - |S'|$. 
An analogous application of \cref{lem:dismountable} to $G[S, T \setminus T']$ reduces covering that subgraph to an extremally matched bi-clique of size at most $n - |T'|$, at an additional cost of at most $4n$~edges.

To summarize, there exists a set of edges consisting of the $Cn$ edges from \cref{lem:recursion}, and another at most $4n+4n=8n$ edges from applying dismountability, such that the problem reduces to finding spanners of two smaller bi-cliques, one of size $a_n := n - |S'|$ and one of size $b_n := n - |T'|$, which satisfy
$a_n + b_n \leq (1-\delta)n$.
Therefore, we have
\begin{equation*}
  f(n) \leq (C+8)n + f(a_n) + f(b_n).
\end{equation*}
We prove by strong induction that $f(n) \leq \tfrac{C+8}{\delta}\,n$ for all~$n$. 
The base case $f(1) = 1 \leq \tfrac{C+8}{\delta}$ is immediate as $C\geq 1, \delta \in (0,1]$ (and we set $f(0):=0$ for the case where $a_n$ or $b_n$ is 0, i.e., the recursion ends).
For the induction step, we have
\begin{align*}
  f(n)
    &\leq (C+8)\,n + f(a_n) + f(b_n)
    \overset{\text{ind.~hyp.}}{\leq}
     (C+8)\,n + \frac{C+8}{\delta}\,(a_n+b_n) \\
    &\leq (C+8)\,n + \frac{C+8}{\delta}\,(1-\delta)\,n
     = \frac{C+8}{\delta} n. \qedhere
\end{align*}
\end{proof}

Next, we establish the relation between the shifted matching graph and the existence of 1-hop-$k$-cut-crossing sources as stated in \cref{sec:intuition}.
We begin by restating the assertion.

\shiftedmatching*

\begin{figure}
\centering
\begin{tikzpicture}[scale=0.8, every node/.style={font=\small, circle, draw, fill, inner sep=0pt, minimum size=5pt}]
\foreach \i in {0,...,6}{
  \node (s\i) at (0,-\i) {};
\node (t\i) at (3,-\i) {};
}
\draw (s0) to node [above, font=\scriptsize, fill=none, draw=none, pos=0.4] {0}(t0);
\draw (s0) to node [above, font=\scriptsize, fill=none, draw=none, pos=0.4] {1}(t1);
\draw (s0) to node [above, font=\scriptsize, fill=none, draw=none, pos=0.4] {2}(t2);
\draw (s3) to node [above, font=\scriptsize, fill=none, draw=none, pos=0.7] {0}(t3);
\draw (s3) to node [above, font=\scriptsize, fill=none, draw=none, pos=0.7] {1}(t4);
\draw (s3) to node [above, font=\scriptsize, fill=none, draw=none, pos=0.7] {2}(t5);
\draw (s3) to node [above, font=\scriptsize, fill=none, draw=none, pos=0.65, sloped, anchor=south, yshift=-7pt] {$n-1$}(t2);
\draw (s3) to node [above, font=\scriptsize, fill=none, draw=none, pos=0.4, sloped, anchor=south, yshift=-7pt] {$n-2$}(t1);
\draw (t5) to node [above, font=\scriptsize, fill=none, draw=none, pos=0.7] {0}(s5);
\draw (t5) to node [above, font=\scriptsize, fill=none, draw=none, pos=0.7] {1}(s4);
\draw (t5) to node [above, font=\scriptsize, fill=none, draw=none, pos=0.65, sloped, anchor=south, yshift=-7pt] {$n-1$}(s6);
\draw (s3) to node [above, font=\scriptsize, fill=none, draw=none, pos=0.4, sloped, anchor=south, yshift=-7pt] {$n-2$}(t1);
\draw (t6) to node [above, font=\scriptsize, fill=none, draw=none, pos=0.3] {0}(s6);
\draw (s1) to node [below, font=\scriptsize, fill=none, draw=none, pos=0.2] {0}(t1);
\node at (-0.3,0) [draw=none, fill=none, anchor=east] {$s^*=s_0$};
\node at (-0.3,-1) [draw=none, fill=none, anchor=east] {$s_1$};
\node at (-0.3,-2) [draw=none, fill=none, anchor=east] {$s_2$};
\node at (-0.3,-6) [draw=none, fill=none, anchor=east] {$s_{n-1}$};
\node at (3.3,0) [draw=none, fill=none, anchor=west] {$t_0$};
\node at (3.3,-1) [draw=none, fill=none, anchor=west] {$t_1$};
\node at (3.3,-2) [draw=none, fill=none, anchor=west] {$t_2$};
\node at (3.3,-6) [draw=none, fill=none, anchor=west] {$t_{n-1}$};

\draw[->,thick] (10:0.5) arc[start angle=10,end angle=-100,radius=0.5];
\spiralarc[thick]{(0,-3)}{10}{-352}{0.4}{0.65}
\spiralarc[thick]{(3,-5)}{190}{-175}{0.4}{0.65}
\end{tikzpicture}
\hspace{3cm}
\begin{tikzpicture}[scale=0.8, every node/.style={font=\small, circle, draw, fill, inner sep=0pt, minimum size=5pt}]
\node (s0) at (0,0) {};
\node (t0) at (3,0) {};
\draw[red] (s0) to (t0);
\foreach \i in {1,...,6}{
  \node (s\i) at (0,-\i) {};
\node (t\i) at (3,-\i) {};
\draw (s0) to (t\i);
\draw (t0) to (s\i);
}
\foreach \i [evaluate=\i as \j using int(\i-1)] in {1,...,6}{
 \node (s\i) at (0,-\i) {};
\node (t\i) at (3,-\i) {};
\draw (s0) to (t\i);
\draw (t0) to (s\i);
\draw[red] (s\i) to (t\i);
\draw[blue] (s\i) to (t\j);
}
\node at (-0.3,0) [draw=none, fill=none, anchor=east] {$s_0$};
\node at (-0.3,-1) [draw=none, fill=none, anchor=east] {$s_1$};
\node at (-0.3,-2) [draw=none, fill=none, anchor=east] {$s_2$};
\node at (-0.3,-6) [draw=none, fill=none, anchor=east] {$s_{n-1}$};
\node at (3.3,0) [draw=none, fill=none, anchor=west] {$t_0$};
\node at (3.3,-1) [draw=none, fill=none, anchor=west] {$t_1$};
\node at (3.3,-2) [draw=none, fill=none, anchor=west] {$t_2$};
\node at (3.3,-6) [draw=none, fill=none, anchor=west] {$t_{n-1}$};
\draw[->,thick] (10:0.6) arc[start angle=10,end angle=-100,radius=0.6];
\draw[->,thick] (3,0)++(-90:0.7) arc[start angle=-90,end angle=-175,radius=0.7];
\draw[blue] (s0) to (t6);
\end{tikzpicture}
\caption{On the left, a subset of edges of the shifted matching graph is illustrated. On the right, a spanner of the shifted matching graph is depicted. The red edges $\{s_i,t_i\}$ denote the~0-labeled edges. The blue edges $\{s_i, t_i-1\}$ denote the edges of label $n-1$.}
\label{fig:shifted_matching}
\end{figure}

\begin{proof}
We first show that the shifted matching graph admits no $1$-hop-$k$-cut-crossing sources. 
By symmetry, for any $s^*\in S$, the $s^*$-ordered labeling of the shifted matching graph satisfies again~$\lambda(\{s_i,t_j\})=j-i\bmod n$ (cf.~\cref{fig:shifted_matching}). 
For every $i>j$, we have
\[
  \pos_{t_j}(s_0)=j 
\overset{i\leq n-1}{<} j + (n-i) = n - (i-j)=j-i\bmod n=\pos_{t_j}(s_i).
\]
Observe that this means precisely that there are no $1$-hop-$k$-cut-crossing sources for $s_0$ and any $k$ (cf.~\cref{fig:cut_star}).
As $s_0$ was chosen arbitrary, the statement follows for every $s_0\in S$.

For the converse, let $G=(S,T,\lambda)$ be an extremally matched temporal bi-clique with no $1$-hop-$k$-cut-crossing sources. Fix $s_0\in S$ and consider the $s_0$-ordered labeling. 
We show that $\pos_{t_j}(s_i)=\pos_{s_i}(t_j)=j-i\bmod n$ for all $i,j$, which implies that $G$ is isomorphic to the shifted matching graph.

The absence of $1$-hop-$k$-cut-crossing sources means precisely (cf.~\cref{fig:cut_star}) that
\begin{equation}\label{eq:no_cut_cross}
  \pos_{t_j}(s_i)>\pos_{t_j}(s_0)\quad\text{for every }0\leq  j<i\leq n-1.
\end{equation}
We derive several consequences. 

First, \eqref{eq:no_cut_cross} implies for each $j$ that, in the ordering of $S$ induced by $t_j$, at least $n-1-j$ vertices appear after $s_0$, and hence $\pos_{t_j}(s_0)\leq n-1 - (n-j-1) =j=\pos_{s_0}(t_j)$.
Since $s_0$ and $t_j$ were arbitrary, we obtain $\pos_t(s)\le\pos_s(t)$ for all $s$ and $t$. 
As $\sum_{s,t}\pos_s(t)=\sum_{s,t}\pos_t(s)=\sum_{i=0}^{n-1}\sum_{j=0}^{n-1} j$, no pair can achieve a strict inequality, so
\begin{equation}\label{eq:pos_coincide}
  \pos_s(t)=\pos_t(s)\quad\text{for all }s\in S,\,t\in T.
\end{equation}
Our findings so far imply the following: For every $j\in\{0,\dots,n-1\}$, in the ordering of $S$ induced by $t_j$, we have by~\eqref{eq:pos_coincide} that~$s_0$ is in $j$-th position and by \eqref{eq:no_cut_cross}  that $\{s_{j+1}, \ldots, s_{n-1}\}$ occupy positions $j+1, \dots, n-1$. 
Therefore, $\{s_0, \ldots, s_j\}$ fill positions $\{0,\ldots, j\}$. By definition, we also have~${\pos_{t_j}(s_j)=0}$. To summarize, we have
\begin{align}
\label{eq:positions}
&\pos_{t_j}(s_j)=0 \nonumber\\ 
&\pos_{t_j}(s_i) \in \{1, \dots, j\} \text{ for } i\in \{0,\dots, j-1\}\\
&\pos_{t_j}(s_i) \in \{j+1, \dots, n-1\} \text{ for } i\in \{j+1,\dots, n-1\}\nonumber
\end{align}
In particular, we have 
\[
\maxmat(t_j)\in\{s_{j+1},\ldots, s_{n-1}\} \text{ for every } j\in \{0,\dots, n-2\}
\]
Applying this at $j=n-2$ gives $\maxmat(t_{n-2})=s_{n-1}$, and since $\maxmat\colon T\to S$ is a bijection, downward induction together with $\maxmat(t_{n-1})=s_0$ yields
\begin{equation}\label{eq:maxmatching}
  \maxmat(t_j)=s_{j+1\bmod n}\quad\text{for all }j.
\end{equation}
Next, observe that \eqref{eq:no_cut_cross} with $s:=s_0$, $i=n-1$ and arbitrary $t\neq  \maxmat(s)$ (i.e., $t=t_j$ for some $j\in\{0,\dots, n-2\}$) reads as
\begin{equation}
\label{eq:no_cut_cross_maxmat}
\pos_t(\minmat(\maxmat(s)))=\pos_t(s_{n-1})
\overset{\eqref{eq:no_cut_cross}}{>}
\pos_t(s_0)=\pos_t(s)
\end{equation}
Since $s_0$ was fixed arbitrarily, this means that this holds for every $s$ and $t\neq \maxmat(s)$.
In our fixed labeling, applying this on $s=s_i$ together with the fact that $\minmat(\maxmat(s_i))=s_{i-1 \bmod n}$ by \eqref{eq:maxmatching}, this implies that
\begin{equation}
\label{eq:ordering_at_t}
\pos_{t_j}(s_{i-1\bmod n})>\pos_{t_j}(s_i) \text{ for every } j\neq i-1.
\end{equation}

To conclude, observe that \eqref{eq:positions} together with \eqref{eq:ordering_at_t} imply that $\pos_{t_j}(s_i)=j-i\bmod n$ as desired. Together with \eqref{eq:pos_coincide}, this completes the proof.
\end{proof}

\section{Concluding remarks}
\label{sec:conclusion}

In this work, we settled the optimal asymptotic size of a lightest spanner of a temporal clique to be $\Theta(n)$, improving on the previous known bound of $\bigO(n \log n)$~\cite{Casteigts21}.

Let us recap the proof of \cref{thm:mainthm_extremally_matched} from an algorithmic perspective. 
Fix an arbitrary $s^*\in S$ and set $k:=\lfloor n/2 \rfloor$. 
By \cref{lem:crossing_or_covering}, either every source~$s_i$ with $i>k$ is 3-hop-$k$-cut-crossing, or there exists a non-crossing source~$s_i$ such that $\ext(s^*)\cup\ext(t_i)$ covers a large subgraph. 
Checking whether a source is 3-hop-$k$-cut-crossing, and finding the corresponding path if so, can be done in polynomial time, as can computing $\ext(v)$ for any vertex~$v$. Hence, determining which of the two conditions holds and computing the edge set~$E^*$ in \cref{lem:final} can be carried out in polynomial time.

In the proof that \cref{lem:final} implies \cref{thm:mainthm_extremally_matched},
we apply dismountability and then recurse to a depth of at most~$\log n$, each step taking polynomial time. 
Last, the steps in the proof that \cref{thm:dismountablecliques} implies \cref{thm:mainthm} are also polynomial, so we obtain the following.

\begin{observation}
\leavevmode
\begin{enumerate}[label=(\roman*)]
\item For every temporal bi-clique $(S,T,\lambda)$, a spanner of size
      $10\min(|S|,|T|)+2(|S|+|T|)$ can be computed in polynomial time.
\item For every temporal clique on~$n$ vertices, a spanner of size~$7n$
      can be computed in polynomial time.
\end{enumerate}
\end{observation}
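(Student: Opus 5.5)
The statement is an algorithmic recap, so the plan is to retrace the constructive proofs of \cref{thm:mainthm_biclique} and \cref{thm:mainthm}. At each step I will check that the existential choice can be made in polynomial time. The recursion has to be handled with some care, because it branches into more than one subproblem.

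For part (i), the first step is to compute $S',T',E'$ from \cref{lem:dismountable}. Each round scans all pairs: for each source $s$, find $\minmat(s)$ and test whether some other source precedes $s$ at that target, and symmetrically for targets. This costs polynomial time per round, and there are at most $|S|+|T|$ rounds.

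Next comes \cref{thm:mainthm_extremally_matched}, via the proof of \cref{thm:mainthm_bicliques_stronger}. Fix $s^*$, compute the $s^*$-ordered labeling and set $k=\lfloor n/2\rfloor$. For each $i>k$, decide whether $s_i$ is $3$-hop-$k$-cut-crossing. Brute force suffices here: enumerate all walks $(s_i,x,y,t_j,s^*)$ and shorter ones with $j\le k$, which is $O(n^3)$ candidates, each checked for temporality in constant time. If every such $s_i$ crosses, collect the paths into $E^*=\star(s^*)\cup E'\cup\maxedges$. Otherwise, pick a non-crossing $s_i$ and compute $\ext(s^*)\cup\ext(t_i)$; each extend index is a maximum over $n$ comparisons.

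The correctness of either branch is already given by \cref{lem:final}. What remains is to apply \cref{lem:dismountable} to the leftover $G'$ and recurse on $G[S'',T'']$. This is a single recursive call on an instance of size at most $\lfloor n/2\rfloor$, so the depth is at most $\log n$ and the total running time is polynomial. The size bound is inherited verbatim from the inductive argument.

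For part (ii), the plan is to repeatedly delete $\{1,2\}$-hop dismountable vertices as in \cref{lem:12dismountable}; testing a vertex means searching over $u,w$, which is polynomial. The sets $V^-,V^+$ are then read off directly, and \cref{thm:dismountablecliques} guarantees an extremally matched bi-clique, to which the algorithm from part (i) applies. I expect the only real obstacle to be confirming that no step hides an exponential search. The crossing test is bounded by the constant hop length $3$, and the recursion is linear rather than branching, so I anticipate no difficulty.
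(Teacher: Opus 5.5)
Your proposal is correct and follows the paper's own argument. You retrace the constructive proofs of \cref{lem:dismountable}, \cref{lem:final} and \cref{thm:mainthm_bicliques_stronger} (a single recursive call on an instance of at most half the size, hence depth at most $\log n$), and then \cref{lem:12dismountable} and \cref{thm:dismountablecliques} for cliques, checking that each existential choice, in particular the constant-length crossing test and the extend indices, is found by polynomial search. The only blemish is your opening remark that the recursion branches into several subproblems, which you correct yourself later: it is a single call, and even a branching recursion as in \cref{lem:recursion} would remain polynomial.
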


With the asymptotics now settled, the natural next question is to determine the correct constants.
Our aim in this work was to establish a linear bound, and we make no claim that our bound of $7n$ is anywhere near optimal.
We believe that optimizing it further lies beyond the scope of this work, but we regard it as an interesting direction for future research.
The best known lower bound is $2n-4$~\cite{bumby1981}.
Moreover, the authors of~\cite{Casteigts21} report that, despite an extensive computer search, they found no instance failing to admit a spanner with $2n-3$ edges, and they pose the following open problem.
\begin{question}
    Does every temporal clique admit a spanner of size $2n-3$?
\end{question}

Recent research~\cite{BiloESA22} has focused not only on minimizing the size of a temporal spanner, but also its \emph{stretch}, that is, the maximum length of a temporal reachability path in the spanner.
We remark that, in \cref{lem:final}, the set $E^*$ in fact covers the claimed subgraphs in such a way that each reachability path has length at most~5, so one might hope that our techniques carry over to this setting.
This is not the case, however.
The main obstacle is our heavy reliance on dismountability (\cref{lem:dismountable}): we sometimes remove a vertex, find a spanner of the remaining graph, and then reintroduce the vertex, which can add~2 to the stretch of the spanner.
This paradigm is therefore incompatible with bounding the stretch.
In particular, \cref{lem:final} assumes the bi-clique to be extremally matched, and this assumption is without loss of generality only when we are allowed to use dismountability.
While~\cite{BiloESA22} shows that no sparse spanners of stretch~2 exist, it is an intriguing question whether the same holds for stretch~3 or more.
We pose the following two versions of this question, both of which remain open.
\begin{question}
    Does every temporal clique admit a spanner of size $\bigO(n)$ and stretch $\bigO(1)$?
\end{question}
\begin{question}
    Does every temporal clique admit a spanner of size $\bigO(n)$ and stretch~3?
\end{question}

\paragraph{Acknowledgements.}
I am grateful to Davide Bilò for suggesting the related question of temporal spanners with bounded stretch at the AlgUW workshop in Będlewo (September 2025), which first drew my attention to the problem resolved in the present paper. I thank him, along with Václav Blažej, Maël Dumas, and Anna Zych-Pawlewicz, for the fruitful discussions that followed.

\bibliographystyle{alpha}
\bibliography{LinearSpannerReferences}

\end{document}